\documentclass[lettersize,journal]{IEEEtran}

\IEEEoverridecommandlockouts

\usepackage{longtable}

\usepackage{float} 
\usepackage{adjustbox}
\usepackage{cite}
\usepackage{amsmath,amssymb,amsfonts}
\usepackage{graphicx}
\usepackage{textcomp}
\usepackage{xcolor}
\usepackage{algorithm} 
\usepackage{algpseudocode}
\usepackage{amsthm}

\newtheorem{proposition}{Proposition}
\newtheorem{lemma}{Lemma}
\newtheorem{definition}{Definition}
\usepackage{algorithm}
\usepackage{algcompatible}
\algnewcommand\algorithmicreturn{\mathbf{return}}
\algnewcommand\RETURN{\algorithmicreturn}

\usepackage{bm}

\usepackage{array}
\usepackage{stfloats}

\def\BibTeX{{\rm B\kern-.05em{\sc i\kern-.025em b}\kern-.08em
    T\kern-.1667em\lower.7ex\hbox{E}\kern-.125emX}}

\begin{document}
\bstctlcite{IEEEexample:BSTcontrol}

\title{Minimizing Energy Consumption in MU-MIMO via Antenna Muting by Neural Networks with Asymmetric Loss \\
\thanks{This work has been submitted to the IEEE for possible publication. Copyright may be transferred without notice, after which this version may no longer be accessible.}
}

 \author{
 \IEEEauthorblockN{Nuwanthika~Rajapaksha$^\dagger$, Jafar Mohammadi$^*$, Stefan Wesemann$^*$, Thorsten Wild$^*$, Nandana Rajatheva$^\dagger$}
  \small \\
$^\dagger$ University of Oulu, Finland\\
$^*$ Nokia Bell Labs, Nokia, Germany 
 }

\maketitle

\begin{abstract}
Transmit antenna muting (TAM) in multiple-user multiple-input multiple-output (MU-MIMO) networks allows reducing the power consumption of the base station (BS) by properly utilizing only a subset of antennas in the BS. In this paper, we consider the downlink transmission of an MU-MIMO network where TAM is formulated to minimize the number of active antennas in the BS while guaranteeing the per-user throughput requirements. To address the computational complexity of the combinatorial optimization problem, we propose an algorithm called neural antenna muting (NAM) with an asymmetric custom loss function. NAM is a classification neural network trained in a supervised manner. The classification error in this scheme leads to either sub-optimal energy consumption or lower quality of service (QoS) for the communication link. We control the classification error probability distribution by designing an asymmetric loss function such that the erroneous classification outputs are more likely to result in fulfilling the QoS requirements. 
Furthermore, we present three heuristic algorithms and compare them with the NAM. Using a 3GPP compliant system-level simulator, we show that NAM achieves $ \sim73\%$ energy saving compared to the full antenna configuration in the BS with $\sim95\%$ reliability in achieving the user throughput requirements while being around $1000\times$ and $24\times$ less computationally intensive than the greedy heuristic algorithm and the fixed column antenna muting algorithm, respectively.

\end{abstract}

\begin{IEEEkeywords}
Sustainability, MU-MIMO, antenna muting, energy efficiency, machine learning, neural networks, antenna element selection, 6G
\end{IEEEkeywords}

\section{Introduction}
Considering the design criteria of 6G, \textit{sustainability} plays a more prominent role than in any other previous generations. Although the estimated CO2 equivalent emission for mobile networks is less than 0.5 percent of the global emissions \cite{Energy2011}, to achieve net-zero emission by 2050, the radio transmission optimization problem must be revisited. In this paper, we envision a slightly different problem statement for the well-known performance versus energy consumption trade-off, which allows for directly targeting net energy consumption in contrast to energy efficiency. 

Massive multiple-input multiple-output (MIMO) enables achieving higher rates in multi-user scenarios owing to the large number of transmit antennas in the base stations (BSs) \cite{6736761_mimo}. Using precoding and beamforming techniques, MIMO systems can serve multiple users simultaneously. 
Although the advantages of the massive MIMO systems are rooted in the large number of antennas \cite{Jacob2011}, this large number of antennas, their corresponding power amplifiers, and radio frequency (RF) chains lead to unnecessary power consumption. Especially when we assume a highly dynamic channel and load, it has been shown that to satisfy the users' quality of service (QoS) requirements only a fraction of the antennas are required \cite{8644331_nokia}. 
Transmit antenna muting (TAM) is a technique that reduces the BS power consumption by carefully deactivating a subset of antennas such that the desired system performance is satisfied.

The work in \cite{8644331_nokia, 6824974_xiang, 7569725_gao} investigates the achievable gains from TAM and different approaches for subarray antenna selection in MIMO. 
The authors in \cite{8644331_nokia} investigate practical deployment situations with varying user loads in massive MIMO, where they study the impact of different hybrid array architectures on spectral efficiency, average user throughput, and power consumption. 
Subarray TAM for massive MIMO is proposed in \cite{7569725_gao}, where the whole antenna array is segmented into a set of smaller subarrays and only one antenna is selected in each subarray, to reduce the hardware complexity of the RF switching network. 
The studies \cite{7127500_Benmimoune, 9172108_Akhtar, 7959971_dong} consider the joint user scheduling and antenna selection problem where only a subset of users for which the QoS constraints can be met are scheduled to share the same time-frequency resources. A greedy algorithm that removes the worst antenna elements and schedules users based on orthogonality is proposed in \cite{7127500_Benmimoune}, whereas a greedy two-step algorithm based on the largest minimum singular value is presented in \cite{7959971_dong}. TAM for improving energy 
efficiency is studied in \cite{6290313_jiang, Arash_2017}. In \cite{6290313_jiang}, the authors jointly optimize the transmit power, the number of active antennas, and the 
antenna subsets at the transmitter and receiver to maximize the energy efficiency by using exhaustive search along with sub-optimal algorithms to reduce the complexity. 
From a problem formulation perspective, \cite{6290313_jiang, Arash_2017} are closely akin to our objective in this work.

Recently, machine learning (ML)-based TAM is introduced to achieve a trade-off between the computational complexity and the system 
performance \cite{7523998_jingon, 8924932_elbir, 8446042_ibrahim, 9337188_Thang, 9448468_yu}. The authors in \cite{7523998_jingon} propose TAM as a multi-class classification approach using support vector machines and k-nearest neighbors 
method for a single-user MIMO (SU-MIMO) system. Furthermore, joint TAM and hybrid beamforming design for single-user mmWave MIMO is proposed in \cite{8924932_elbir} which uses two serial convolutional neural networks (CNNs). There, one CNN is used to predict the selected antennas, and the other CNN is used to 
estimate the hybrid beamformers. A deep neural network (DNN)-based approach for joint multicast beamforming and TAM is presented in \cite{8446042_ibrahim}, which learns the mapping between the channel state information and a subset of antennas that maximizes the minimum signal-to-noise ratio (SNR) at the receivers in a multicast setup. Furthermore, the authors in \cite{9337188_Thang} propose a joint antenna selection and precoding design (JASPD) algorithm to maximize the system sum rate of the MU-MIMO downlink subject to a transmit power constraint and QoS requirements. A DNN 
is implemented to learn the output from the JASPD algorithm to overcome its combinatorial complexity. Although \cite{9337188_Thang} applies ML to a TAM problem, our optimization problem, objective, and challenges are quite different. 
In \cite{9448468_yu}, the authors propose a low-complexity TAM algorithm for MU-MIMO based on multi-label learning, where a DNN is employed to determine the set of selected antennas for a given channel matrix. 
From a solution perspective, the work in \cite{9337188_Thang, 9448468_yu} can be considered the closest to ours as they both propose supervised learning-based solutions for TAM in MU-MIMO. However, our problem formulation is completely different from their sum rate optimization along with a fixed number of active antennas (RF chains).

In this paper, we investigate some practical solutions to reduce energy consumption by adaptively muting a subset of antennas in an MU-MIMO downlink setup. Our main constraint is to satisfy the per-user throughput requirements. We propose to represent our constrained optimization problem in terms of a supervised learning neural network (NN)-based method. We train an NN to learn the optimal antenna configuration given by a baseline algorithm. We further improve the reliability (throughput guarantee) of the NN output by introducing a novel asymmetric loss function for training the NN. We also propose a greedy-based TAM algorithm that selects the best antenna element subset in order to achieve the QoS requirements along with a few heuristics for TAM for performance comparison. Finally, we perform a detailed computational complexity analysis to show that the NN has significantly lower computational complexity compared to the greedy-based TAM algorithm and the other heuristics. Simulation results are presented showing TAM achieving energy savings, and specifically the potential of the ML-based approach 
for a practical implementation due to its reduced complexity and comparable performance.

Our main contributions: 
\begin{itemize}
    \item We target the total energy consumption minimization problem via TAM to minimize the number of active antenna elements, as opposed to energy efficiency maximization. Our problem statement is more aligned with 6G's vision of sustainability. We further consider per-user throughput requirement constraints in the optimization problem to achieve the desired network QoS performance.
    We introduce neural antenna muting (NAM) to address the computational complexity issue of the energy minimization problem. NAM is an NN paired with a designed \textit{asymmetric} loss function to achieve higher \textit{reliability}. 
    
    \item We mathematically prove the effectiveness of our proposed asymmetric loss on the \textit{reliability}. 
    \item We propose a greedy-based TAM algorithm and two low-complexity heuristics for TAM for performance comparison.
    \item We derive the complexity of the proposed NN and compare it with that of heuristic and greedy solutions.
\end{itemize}

The following notation is used throughout the paper. Boldface upper-case letters denote matrices and boldface lower-case letters denote column vectors. The superscripts $(\cdot)^T, (\cdot)^H$ and $(\cdot)^{-1}$
denote the transpose, Hermitian, and inverse operations, respectively. The $\lVert \cdot \rVert$ operator denotes the Euclidean norm of a vector and $\lVert \cdot \rVert_0$ denotes the $l_0$ norm of a vector. The $\lvert \cdot \rvert$ operator denotes the absolute value. Furthermore, $\mathbb{E} \big[\cdot \big]$ denotes the expectation of the given variable. By $diag(\mathbf{x})$ we denote a diagonal matrix generated with the diagonal given by the vector $\mathbf{x}$. An $N \times N$ identity matrix is denoted by $\mathbf{I}_N$. Finally, $tr( \mathbf{X} )$ denotes the trace of matrix $\mathbf{X}$.

\section{MU-MIMO System Model}
\label{section_system_model}

In this section, we present the general MU-MIMO system model without considering TAM for simplicity. Then in Section \ref{section_antenna_selection} we formulate the TAM problem. We consider the downlink of a single-cell MU-MIMO system. The BS which has $M$ antennas simultaneously serves $K= \{1,2,...,K\} \in \mathcal{K}, \lvert \mathcal{K} \rvert \leq M$ co-scheduled users out of $J \geq K$ users randomly distributed within the sector. The BS's cross-polarized antennas are arranged into $M_{col}$ vertical columns and $M_{row}$ horizontal rows where $M = 2 \times M_{col} \times M_{row}$. Each antenna element is individually controlled by a separate RF chain. Furthermore, we assume that both the BS and the users have full channel knowledge. The BS transmits $L = \sum_{k=1}^K L_k \leq M$ data streams where $k$th user receives its $L_k \leq N_{k}$ intended data streams at the $N_k$ receive antennas. The received signal $\mathbf{y}_k$ at the $k$th user is

\begin{equation}
\begin{aligned}
\label{eq:y_k_received}
    \mathbf{y}_k & =  \mathbf{H}_k \mathbf{W} \mathbf{x}  + \mathbf{n}_k , \\
    & =  \mathbf{H}_k \mathbf{W}_k \mathbf{x}_k +  \mathbf{H}_k \sum_{j=1, j \neq k}^K \mathbf{W}_j \mathbf{x}_j  +  \mathbf{n}_k, \\ 
\end{aligned}
\end{equation}

\noindent where $\mathbf{x} = [\mathbf{x}_1^T, \mathbf{x}_2^T, ..., \mathbf{x}_K^T]^T$ is the $L \times 1$ vector of transmitted symbols from the BS with $\mathbf{x}_k = [x_{k,1}, x_{k,2}, ..., x_{k, L_k}]^T$ intended for user $k$. We assume that the individual data streams have unit power, i.e. $\mathbb{E} [ \mathbf{x}_k \mathbf{x}_k^H ] = \mathbf{I}_{L_k} $. Let $\mathbf{W} = [\mathbf{W}_1, \mathbf{W}_2, ..., \mathbf{W}_K] \in \mathbb{C}^{M \times L}$ be the transmitter matrix at the BS for all the users where the transmitter matrix corresponding to user $k$ is $\mathbf{W}_k \in \mathbb{C}^{M \times L_k}$. The transmitter matrix is separated into a transmit power matrix and a beamforming matrix as $\mathbf{W} =  \mathbf{B} \mathbf{P}^{1/2}$ where $\mathbf{P} = diag([P_1, P_2,...,P_L]) \in \mathbb{C}^{L \times L} $ is the diagonal transmit power matrix where $P_l = P, \quad l=1,..,L$ is the power coefficient of the $l$th data stream considering an equal transmit power strategy. $\mathbf{B} = [\mathbf{B}_1, \mathbf{B}_2, ..., \mathbf{B}_K] \in \mathbb{C}^{M \times L}$ is the normalized beamforming matrix where the beamforming matrix for user $k$ being $\mathbf{B}_k \in \mathbb{C}^{M \times L_k}$. $\mathbf{H}_k \in \mathbb{C}^{N_{k} \times M} $ is the channel matrix from the BS to user $k$. The first term in \eqref{eq:y_k_received} is the intended signal component for user $k$, and the second term is the intra-cell interference from the other users in the cell. Finally, $\mathbf{n}_k $ is the $N_{k} \times 1$ noise vector for the user $k$ where  $\mathbf{n}_k \sim \mathcal{CN}(0,\mathbf{R}_{n_k})$ with $\mathbf{R}_{n_k} = \sigma_n^2 \mathbf{I}_{N_k}$. Note that $\mathbf{n}_k$ could also model the noise and the inter-cell interference in a multi-cell setup which we do not consider under the scope of this paper. The estimated transmitted data symbol $\mathbf{\hat{x}}_k$ for user $k$ is obtained as in \eqref{eq:x_k_estimated} by applying an appropriate receiver filter $\mathbf{V}_k \in \mathbb{C}^{L_k \times N_{k}} $ on the received signal $\mathbf{y}_k$ at each user $k$. 

\begin{equation}
\label{eq:x_k_estimated}
\begin{aligned}    
    \mathbf{\hat{x}}_k & = \mathbf{V}^H_k \mathbf{y}_k =  \mathbf{V}^H_k \mathbf{H}_k \mathbf{W} \mathbf{x}  + \mathbf{V}^H_k  \mathbf{n}_k , \\
    & =  \mathbf{V}^H_k \mathbf{H}_k \mathbf{W}_k \mathbf{x}_k \quad + \\
 & \qquad \mathbf{V}^H_k  \mathbf{H}_k \sum_{j=1, j \neq k}^K \mathbf{W}_j \mathbf{x}_j + \mathbf{V}^H_k \mathbf{n}_k. \\
\end{aligned}
\end{equation}

The signal-to-interference and noise-ratio (SINR) of user $k$ can be obtained by 

\begin{equation}
    \label{eq:sinr_basic}
    \text{SINR}_k = \frac{  | \mathbf{V}_k^H \mathbf{H}_k \mathbf{W}_k |^2} {\sum_{j=1,j \neq k}^K | \mathbf{V}_k^H \mathbf{H}_k \mathbf{W}_j |^2  + \sigma_n^2}.  
\end{equation}

\subsection{Receiver and Transmitter Matrix Design: MMSE Receiver and Eigen-Beamforming at the Transmitter}

At the receiver, the aim is to minimize the estimation error between the transmitted signal $\mathbf{x}_k$ and the estimated signal $\mathbf{\hat{x}}_k$. The error covariance matrix between each transmit and received symbol pair $\mathbf{x}_k$ and $\mathbf{\hat{x}}_k$ of user $k$ is defined as

\begin{equation}
\label{eq:error_cov_definition}
    \mathbf{E}_k =  \mathbb{E} \Big[(\mathbf{\hat{x}}_k - \mathbf{x}_k)(\mathbf{\hat{x}}_k - \mathbf{x}_k)^H \Big].
\end{equation}

Using \eqref{eq:x_k_estimated} and expanding the expression, 

\begin{equation}
\label{eq:error_cov}
\begin{aligned}
    \mathbf{E}_k &= \mathbf{V}_k^H \mathbb{E} \Big[ \mathbf{y}_k \mathbf{y}_k^H \Big] \mathbf{V}_k - \mathbf{V}_k^H \mathbb{E} \Big[ \mathbf{y}_k \mathbf{x}_k^H \Big] \\
    & \qquad \qquad \qquad - \mathbb{E} [ \mathbf{x}_k \mathbf{y}_k^H] \mathbf{V}_k   + \mathbf{I}_{L_k}, \\
    &= \mathbf{V}_k \Big( \mathbf{H}_k \mathbf{W} \mathbf{W}^H \mathbf{H}_k^H + \mathbf{R}_{n_k} \Big) \mathbf{V}_k^H - \mathbf{V}_k^H \mathbf{H}_k \mathbf{W}_k \\
    & \qquad \qquad  \qquad - \mathbf{W}_k^H \mathbf{H}_k^H \mathbf{V}_k^H + \mathbf{I}_{L_k}.
\end{aligned}
\end{equation}

The mean squared error (MSE) of the estimations corresponds to the diagonal entries in the error covariance matrix $ \mathbf{E}_k$, thus for a given transmitter matrix $\mathbf{W}_k$, the minimum mean square error (MMSE) receiver can be obtained as \cite{1312557_mu_mimo_mmse}

\begin{equation}
\begin{aligned}
\label{eq:mmse_receiver}
    \mathbf{V}_{k,mmse} & = \Big( \mathbf{H}_k \mathbf{W}  \mathbf{W} \mathbf{H}_k^H + \mathbf{R}_{n_k} \Big)^{-1}  \mathbf{H}_k \mathbf{W}_k,\\
    & = \Big( \mathbf{H}_k \mathbf{W}_k  \mathbf{W}_k^H \mathbf{H}_k^H + \mathbf{R}_{(I+n)_k} \Big)^{-1}  \mathbf{H}_k \mathbf{W}_k,\\
\end{aligned}
\end{equation}

\noindent with

\begin{equation}
\label{eq:if+n_cov}
\mathbf{R}_{(I+n)_k} = \mathbf{R}_{n_k} + \sum_{j=1, j \neq k}^K \mathbf{H}_k \mathbf{W}_j \mathbf{W}_j^H \mathbf{H}_k^H, \\
\end{equation}

\noindent where $\mathbf{R}_{(I+n)_k}$ can be defined as the interference plus noise covariance matrix for user $k$. Given this MMSE receiver, the original error covariance matrix in \eqref{eq:error_cov} can be simplified as 


\begin{align}
    \mathbf{E}_{k} & = \mathbf{I}_{L_k} - \mathbf{W}_k^H \mathbf{H}_k^H \Big ( \mathbf{W}_k^H \mathbf{H}_k^H \mathbf{H}_k \mathbf{W}_k \\ 
    & \hspace{10mm} + \mathbf{R}_{(I+n)_k} \Big)^{-1}\mathbf{H}_k \mathbf{W}_k, \\    
    \label{eq:error_cov_final_1}
    &= \Big( \mathbf{I}_{L_k} + \mathbf{W}_k^H \mathbf{H}_k^H  \mathbf{R}_{(I+n)_k}^{-1} \mathbf{H}_k \mathbf{W}_k \Big)^{-1}, \\
    \label{eq:error_cov_final}
    & = \Big( \mathbf{I}_{L_k} +\mathbf{H}_{eff,k}^H  \mathbf{R}_{(I+n)_k}^{-1} \mathbf{H}_{eff,k} \Big)^{-1},   
\end{align}

\noindent where we denote $\mathbf{H}_{eff,k} = \mathbf{H}_k \mathbf{W}_k$ as the effective channel of user $k$. Note that \eqref{eq:error_cov_final_1} follows from the matrix inversion Lemma \cite{1312557_mu_mimo_mmse}. The MSE of the ($k$th, $i$th) data stream is the $i$th diagonal element of $\mathbf{E}_{k}$, i.e. $\text{MSE}_{k,i} = [\mathbf{E}_{k}]_{ii}$. Then, using the inverse relationship between the MSE and SINR \cite{4355332_mmse_transceiver}, the SINR of user $k$ can be obtained as follows by taking the average MSE over the $L_k$ data streams,

\begin{equation}
    \label{eq:sinr}
    \text{SINR}_k = \frac{1}{L_k} \sum_{i=1}^{L_k}\Big(\frac{1}{\text{MSE}_{k,i}} - 1 \Big).
\end{equation}

The per-user rate for a given channel matrix $\mathbf{H}_k$ and transmitter matrix $\mathbf{W}_k$ can be then calculated by the usual rate equation for a given bandwidth $B$ as 

\begin{equation}
    \label{eq:user_rates}
    \text{r}_k (\mathbf{H}_k, \mathbf{W}_k) = B \hspace{2mm} \text{log}_2 \big(1 + \text{SINR}_k \big).
\end{equation}

For the transmitter matrix design, the aim is to reduce the interference or minimize the $\mathbf{R}_{(I+n)_k}$ in \eqref{eq:if+n_cov}. In \cite{Väisänen2018_eig_su_mimo}, Eigen-beamforming is shown to achieve the optimal channel capacity for the SU-MIMO case. The same approach could be extended for MU-MIMO case by considering each user’s channel independently and computing Eigen-beamforming coefficients using the SU-MIMO solution. The problem with this approach is the correlation between the beamforming coefficients of different users caused by closely located users which causes interference. It could be overcome by utilizing a scheduler that schedules users with near-orthogonal channels so that inter-user interference could be neglected in SINR estimation when using Eigen-beamforming \cite{Väisänen2018_eig_su_mimo}. Such a  semi-orthogonal user selection algorithm is proposed in \cite{1603708_sus_scheduling}. For this work, we have utilized a proportional fairness scheduler based on Eigen-beam cross-correlation. Each time when a candidate user is considered for scheduling, its Eigen-beam cross-correlation with the already selected users is calculated, and the user is added to the scheduled set only if it has a cumulative cross-correlation less than a predefined threshold. Thus, when the threshold value is set to be low enough and by having to select $K$ users among geographically distributed $J \geq K$ users, we can assume that the scheduler results in selecting users with minimum possible inter-user interference which can be negligible. Therefore, for the SINR estimations in TAM algorithms, we assume no inter-user interference due to the scheduler.

Let $\mathbf{R}_k = \mathbb{E}[\mathbf{H}_k^H \mathbf{H}_k ] \in \mathbb{C}^{M \times M}$ be the channel covariance matrix of user $k$, and $\mathbf{R}_{k,avg} = (\mathbf{R}_{k,H\_pol} + \mathbf{R}_{k,V\_pol})/2 \in \mathbb{C}^{M/2 \times M/2} $ be the channel covariance matrix of user $k$ be the average channel covariance matrix averaged across the two polarizations. Note that $\mathbf{R}_{k,H\_pol}$ and $\mathbf{R}_{k,V\_pol}$ are the two per-polarization covariance matrices extracted from $\mathbf{R}_k$. Then, the Eigen-beamforming is obtained as the eigenvector corresponding to the largest eigenvalue of the average channel covariance matrix $\mathbf{R}_{k, avg}$ by performing eigenvalue decomposition as,

\begin{equation}
    \label{eq:eig_dec}
    \mathbf{R}_{k, avg} = \mathbf{U}_k \mathbf{\Lambda}_k \mathbf{U}_k^H,
\end{equation}

\noindent where $\mathbf{\Lambda}_k = diag ([\lambda_1, \lambda_2, \dots])$ is the diagonal matrix consisting of the eigenvalues of $\mathbf{R}_{k, avg}$ and $\mathbf{U}_k = [\mathbf{u}_{k1}, \mathbf{u}_{k2},...]$ is the corresponding eigenvector matrix of size $ M/2 \times M/2$. Then, we can obtain the beamforming matrix as $\mathbf{B}_{k,eig}  = \mathbf{I}_{L_k} \otimes \mathbf{u}_{k1}$ where $\mathbf{u}_{k1}$ is the eigenvector corresponding to the largest eigenvalue $\lambda_1$ and $\otimes$ is the Kronecker product between two matrices. Then the transmitter matrix would be $\mathbf{W}_{k,eig} = \mathbf{B}_{k,eig} \mathbf{P}_k^{1/2} $ where $\mathbf{P}_k = diag([P_1,..,P_{L_k}]) \in \mathbb{C}^{L_k \times L_k} $ is the transmit power matrix of user $k$. We consider an equal transmit power strategy and thus $P_l = P, \quad l=1,..,L_k$.

The MMSE error covariance matrix $\mathbf{E}_{k,eig}$ can be obtained by \eqref{eq:error_cov_final_eigbf} by applying the Eigen-beamforming transmitter matrix in \eqref{eq:error_cov_final} as  $\mathbf{W}_k = \mathbf{W}_{k,eig}$. Note that we take $\mathbf{R}_{(I+n)_k} = \mathbf{R}_{n_k}$ since we assume that there is no inter-user interference due to the user scheduling algorithm implemented. Per-user SINR values and rates when using Eigen-beamforming in \eqref{eq:error_cov_final_eigbf} could be then obtained using \eqref{eq:sinr} and (\ref{eq:user_rates}).

\begin{equation}
\label{eq:error_cov_final_eigbf}
    \mathbf{E}_{k,eig} = \Big( \mathbf{I}_{L_k} + \mathbf{W}_{k,eig}^H \mathbf{H}_k^H  \mathbf{R}_{n_k}^{-1} \mathbf{H}_k \mathbf{W}_{k,eig} \Big)^{-1}.  
\end{equation}

\subsection{Zero-Forcing Beamforming at the BS} 
Here we also discuss the widely known zero-forcing beamformer which can be used at the BS to mitigate inter-user interference. Specifically, in an MU-MIMO setup where the $M$-antenna BS simultaneously transmits to $K \leq M$ single-antenna users where no user coordination is possible, zero inter-user interference in the cell ($\mathbf{h}_k \mathbf{w}_j = 0, j \neq k$) could be achieved by channel inversion, i.e. performing zero-forcing beamforming \cite{1207369_multiantenna_gbc_zf}, \cite{1261332_zeroforcing}. Let the system channel matrix is given by $\mathbf{H} = [\mathbf{h}_1^T, \mathbf{h}_2^T, ..., \mathbf{h}_K^T]^T \in \mathbb{C}^{K \times M} $. Note that here we denote the per-user channels as $\mathbf{h}_k \in \mathbb{C}^{1 \times M}$ as opposed to our previous notation $\mathbf{H}_k$ since they are row vectors when we consider single-antenna users, i.e. $N_k$=1. Then, the  zero-forcing beamforming matrix $\mathbf{W}_{zf} = [\mathbf{w}_1, \mathbf{w}_2, ..., \mathbf{w}_K] \in \mathbb{C}^{M \times K}$ with per-user beamforming vectors $\mathbf{w}_k \in \mathbb{C}^{M \times 1}$, can be obtained as \cite{1207369_multiantenna_gbc_zf}

\begin{equation}
    \label{eq:zf_beamforming}
    \mathbf{W}_{zf} = \mathbf{H}^{\dagger} = \mathbf{H}^H(\mathbf{H} \mathbf{H}^H)^{-1},
\end{equation}

\noindent where $\mathbf{H}^{\dagger}$ is the pseudo-inverse of the channel matrix. Then, with fixed power allocations at each user ($P_k = P, \quad \forall k$ ) and a normalized beamforming vector at the transmitter, the per-user rates for the zero-forcing transmitter can be obtained from 

\begin{equation}
\begin{aligned}
    \label{eq:sinr_zf_basic}
    \text{r}_{k,zf}(\mathbf{H}, \mathbf{W}_{zf}) & = B  \hspace{2mm} \text{log}_2 \Bigg( 1 +  \frac{ P_k  \frac{\lvert \mathbf{h}_k \mathbf{w}_k\rvert^2} {\lVert \mathbf{w}_k \rVert^2} } {\sum_{j=1,j \neq k}^K P_j \frac{\lvert \mathbf{h}_k \mathbf{w}_j\rvert^2} {\lVert \mathbf{w}_j \rVert^2}    +  \sigma_n^2} \Bigg), \\ 
    & = B \hspace{2mm} \text{log}_2 \Bigg( 1 + \frac{P \lvert \mathbf{h}_k \mathbf{w}_k\rvert^2} {\sigma_n^2 \lVert \mathbf{w}_k \rVert^2 }  \Bigg).
\end{aligned}
\end{equation}

This can be further reduced to \cite{1207369_multiantenna_gbc_zf}

\begin{equation}
    \label{eq:sinr_zf_simplified}
    r_{k,zf}(\mathbf{H}, \mathbf{W}_{zf}) = B \hspace{2mm} \text{log}_2  \Big( 1 + \frac{b_k P_k}{\sigma_n^2} \Big),
\end{equation}

\noindent where

\vspace{-2mm}

\begin{equation}
    \label{eq:sinr_zf_bk}
    b_k  = \frac{1}{\lVert   \textbf{w}_k \rVert^2} = \frac{1}{[(\mathbf{H} \mathbf{H}^H)^{-1}]_{k,k}}.
\end{equation}

\section{Transmit Antenna Muting for Power Minimization}
\label{section_antenna_selection}

In this section, we formulate the optimization problem for TAM. Our aim is to minimize the energy consumption of the cell by reducing the number of active antenna elements in the BS while guaranteeing the user QoS requirements. Our problem formulation is practical and results in a different policy of antenna muting than the rather commonly considered problem of maximizing the system sum rate or the energy efficiency per transmitted bit. 

In the BS's antenna panel, we assume that each antenna element is co-located with its cross-polarized counterpart, thus they demonstrate the same spatial correlation. Furthermore, we assume both co-located elements are controlled by the same switch.  
Therefore, given the antenna indices set $\mathcal{M}_{pol} = \{1,...,M/2 \}$, let $\mathbf{H}_{k, \mathcal{A}} = \mathbf{H}_k \mathbf{A}_\mathcal{A} \in \mathbb{C}^{ K \times M}$ be the channel matrix of user $k$, where $\mathcal{A} \subseteq \mathcal{M}_{pol}$ is the activate antenna element subset. Furthermore,  $\mathbf{A}_\mathcal{A} = \mathnormal{diag} ([\mathbf{a}^T_\mathcal{A}, \mathbf{a}^T_\mathcal{A}]) $ is the diagonal antenna activation matrix with  BS antenna activation vector $\mathbf{a}_\mathcal{A} = [a_1,....,a_M/2]^T \in \mathbb{C}^{M/2 \times 1}$ and the binary antenna element activation indicator is defined by $ a_i = \{1: i \in \mathcal{A},\; 0: i \notin \mathcal{A} \}$. Then, for any transmitter matrix $\mathbf{W}_{k, \mathcal{A}}$ given $\mathbf{H}_{k, \mathcal{A}}$, the error covariance matrices, per-user SINR values, and user rates can then be calculated using \eqref{eq:error_cov_final} - \eqref{eq:user_rates}.
We can then formulate the following optimization problem where the number of active antenna elements is minimized such that achievable per-user throughputs are higher than a given threshold, which captures the QoS guarantee.

\begin{equation}
\begin{aligned}
\label{eq:min_elements}
& \text{minimize}_{a_i \in \{0, 1\}} && \lVert \mathbf{a}_{\mathcal{A}} \rVert_{0}, \\
& \textrm{subject to} && r_{min} \leq r_k (\mathbf{H}_{k, \mathcal{A}}, \mathbf{W}_{k, \mathcal{A}}, \mathbf{a}_{\mathcal{A}}), \quad \forall k,\\
& && M_{min} \leq  \lVert \mathbf{a}_{\mathcal{A}} \rVert_{0},
\end{aligned}
\end{equation}

\noindent where $r_k (\mathbf{H}_{k, \mathcal{A}}, \mathbf{W}_{k, \mathcal{A}}, \mathbf{a}_{\mathcal{A}}) $
is the per-user rate for the given antenna activation $\mathbf{a}_\mathcal{A}$ obtained from \eqref{eq:error_cov_final} - \eqref{eq:user_rates}, and $r_{min}$ is the minimum allowed per-user rate to guarantee the QoS requirements. $M_{min}$ is the minimum number of active antennas (per-polarization) in the BS which is preset depending on the system parameters $M, K, N_k, L_k$.

Since \eqref{eq:min_elements} is combinatorial and non-convex, we cannot solve it easily using optimization techniques. It is an NP-hard problem that scales exponentially in processing run-time with the number of antennas. Heuristic-based algorithms could be used to solve it sub-optimally, however, such sub-optimal methods might not be feasible for practical implementation given their iterative nature and the processing complexity involved, which limits achieving proper energy savings from TAM. We address this computational complexity issue by developing an ML-based TAM algorithm that solves (\ref{eq:min_elements}) in a data-driven manner with lower computational complexity. The proposed NAM approach is detailed in Section \ref{section_ml}. Then, in Section \ref{section_heuristics} we present several heuristic-based algorithms to solve (\ref{eq:min_elements}) and compare their performance and the computational complexity with the proposed NAM approach.

\section{NAM: Neural Network-based TAM}
\label{section_ml}

In this section, we present NAM, an ML-based TAM approach capable of learning the TAM to solve (\ref{eq:min_elements}), which is the main contribution of this paper. We propose a supervised learning algorithm that can learn to approximate the labels produced by a baseline algorithm. Specifically, we train a NN to learn the antenna muting as a classification task, using the fixed column TAM method (described in Section \ref{section_heuristics_B}) as the baseline to generate labels. The design and implementation details of the NAM algorithm are detailed in the following sub-sections.

\subsection{NN Model Architecture for NAM}

The aim of the supervised learning approach for NAM is to learn the input-output mapping of the labels during model training. Selecting a proper input to the model is therefore crucial when designing a NN to learn the optimal antenna selections. The primary determining factor for the antenna muting is undoubtedly the channels between the BS and the users, along with the channel covariance matrices and beamforming vectors which are used for user throughput calculations. Furthermore, we need to consider the input dimensions as well, since the original channel matrices and channel covariance matrices have higher dimensions which will result in a complex model. 

After some initial investigations, the combination of two different inputs was found to give the desired performance in the NN and a feasible model complexity. The first input would be the averaged channel vectors of each co-scheduled user, averaged over all the sub-carriers and user antennas to reduce the model dimensionality. The second input would be the beamforming vectors of the co-scheduled users which are computed considering the full antenna array configuration. In each of the inputs, the real and imaginary parts are separated and normalized before taking them into the model. All the inputs from co-scheduled users are stacked together to form the model input $\mathbf{X} = [\mathbf{X}_1, \mathbf{X}_2, .., \mathbf{X}_K] \in \mathbb{R}^{M/2 \times 4 \times K}$, where,

\begin{equation}
\label{eq:dnn_inputs}
    \mathbf{X}_k = [\Re (\mathbf{W}_{k,pol}), \Im (\mathbf{W}_{k,pol}), \Re (\mathbf{H}_{k,pol}), \Im  (\mathbf{H}_{k,pol})],
\end{equation}

\noindent where $\mathbf{H}_{k,pol} \in \mathbb{C}^{M/2 \times 1} $ is the channel vector of user $k$ averaged over the two polarizations and user antennas, and $\mathbf{W}_{k,pol} \in \mathbb{C}^{M/2 \times 1} $ is the transmitter vector for user $k$ averaged over the two polarizations. $\Re (.)$ and $\Im (.)$ denote the real and imaginary components of the vectors respectively.

We have selected a simple CNN model with a single convolutional layer followed by two dense layers as the model architecture considering the acceptable performance and lower model complexity. 
Since we are implementing TAM as a classification problem, the labels are converted to one-hot vectors $\mathbf{y} = \mathbf{1}_y \in \mathbb{R}^N $, i.e. an N-dimensional vector where the $y$th element of which is equal to one and zero otherwise. Here, $N$ refers to the number of array configuration classes, which is 8 in this case. Similarly, the model output $\mathbf{\hat{y}}$ is also a one-hot encoded vector. Accordingly, the output layer has the \textit{Softmax} activation and the hidden layers have \textit{ReLU} activations. Then we train the model using the \textit{categorical cross-entropy} loss function for multi-class classification, which is given by

\begin{equation}
    \label{eq:cross_entropy}
    l_{sym}(\mathbf{y},   \hat{\mathbf{y}}) = - \sum_{i=0}^{N-1} y_i \log{\hat{y}_i}.
\end{equation}

\subsection{Asymmetric Custom Loss Function to Ensure QoS}

Training the model to minimize the \textit{categorical cross-entropy} loss would push the model outputs to be as close as to the labels but 100\% performance cannot be guaranteed. This means that user throughput requirements in \eqref{eq:min_elements} cannot be always guaranteed by the NN outputs which is a challenge in NN implementation. However, due to the nature of the antenna configuration options that we have in this problem, we can satisfy the user throughput requirements (guarantee reliability) if the NN output class is similar to or higher than the label class. Taking this into consideration, we have modified the original \textit{categorical cross-entropy} loss function to push the NN outputs towards guaranteeing reliability. Thus, we have formulated a new loss function,

\begin{equation}
    \label{eq:custom_loss}
    loss(\mathbf{y}, \hat{\mathbf{y}}) = l_{sym}(\mathbf{y}, \hat{\mathbf{y}})+ \lambda  l_{asy}(y_{max},  \hat{y}_{max}),
\end{equation}

\noindent where the first part of the loss function is the original \textit{categorical cross-entropy} loss. Then, $l_{asy}(y_{max},   \hat{y}_{max})$ is the penalty term added to force the confusion matrix from the NN to be upper diagonal in order to guarantee the throughput constraints. The actual class label and NN output class are denoted by $y_{max} = argmax (\mathbf{y})$ and $\hat{y}_{max} = argmax (\mathbf{\hat{y}})$ respectively. The confusion matrix $C \in \mathbb{R}^{N \times N}$ between the actual labels $\mathbf{y}$ and NN outputs $\hat{\mathbf{y}}$ is defined as $c_{i,j} = Prob(\hat{y}_{max} = j | {y}_{max} = i), i,j=0,..,N-1$. The accuracy of the NN is calculated from the diagonal of the confusion matrix where $acc = \sum_{i=j} c_{i,j}$. An upper diagonal confusion matrix implies having $y_{max} \leq \hat{y}_{max}$, thus, guaranteeing the QoS requirements as explained earlier.

\begin{figure}[ht]
\centerline{\includegraphics [width=0.52\textwidth]{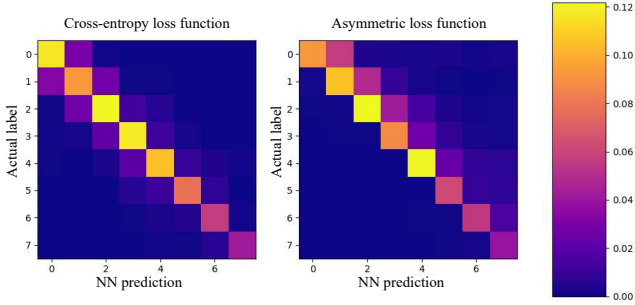}}
\caption{Confusion matrices for NN outputs obtained from the model trained with \textit{categorical cross-entropy} loss function in (\ref{eq:cross_entropy}) (left) and with new asymmetric loss function in (\ref{eq:custom_loss}) (right).}
\label{fig:confusion_matrix}
\end{figure}

Selecting a suitable function for $l_{asy}(y_{max},   \hat{y}_{max})$ and properly tuning the coefficient $\lambda$ is essential in order to improve the performance. We observed that first training the model with the original \textit{categorical cross-entropy} loss function and then retraining it with the above custom loss function gives a better improvement in QoS guarantee. We investigated different functions for the asymmetric loss and used

\begin{equation}
 \label{eq:custom_loss_2}
  l_{asy}(y_{max},   \hat{y}_{max}) =
    \begin{cases}
      {(y_{max}-\hat{y}_{max})^2}, \quad y_{max} > \hat{y}_{max},\\
      \alpha (y_{max}-\hat{y}_{max})^2, \quad \text{otherwise},
    \end{cases}       
\end{equation}

\noindent where (\ref{eq:custom_loss_2}) was implemented using the Keras \textit{LeakyReLU} layer. Furthermore, since the $argmax$ implementation in TensorFlow does not support backpropagation and gradient operations we replace it with the following approximation to obtain a differentiable output class.

\begin{equation}
 \label{eq:softargmax}
  {softargmax}(\hat{\mathbf{y}}) = \sum_i \frac{\exp({\beta \hat{y}_i})}{ \sum_j \exp({\beta \hat{y}_j})} i,    
\end{equation}

\noindent where $\hat{\mathbf{y}} = [\hat{y}_0, \hat{y}_1, ..., \hat{y}_{N-1}]$ is the NN output at the output from the \textit{Softmax} layer and $\beta \geq 1$. Thus, in (\ref{eq:custom_loss}) and (\ref{eq:custom_loss_2}) we use $y_{max}={softargmax}   (\mathbf{y})$ and $\hat{y}_{max} = {softargmax}   (\hat{\mathbf{y}})$ during the TensorFlow custom loss function implementation. Fig. \ref{fig:confusion_matrix} illustrates how the confusion matrix is changed when the outputs are obtained from the NN trained with the new asymmetric loss function in (\ref{eq:custom_loss}).

\begin{definition} \label{def:rel}
We quantify reliability $B(\alpha)$ as in the following: 
\begin{align}
B(\alpha) := \mathbb{E} \Big [ r_k (\mathbf{H}_{k, \mathcal{A}}, \mathbf{W}_{k, \mathcal{A}}, \mathbf{a}_{\mathcal{A}}, \alpha) - r_{min} \Big ],
\end{align}
where, $\alpha$ is introduced in (\ref{eq:custom_loss_2}), the expectation $\mathbb{E} \big[ \cdot \big]$ is taken over channel observations.
The rate $r_k (\mathbf{H}_{k, \mathcal{A}}, \mathbf{W}_{k, \mathcal{A}}, \mathbf{a}_{\mathcal{A}}, \alpha) $
is computed using antenna activation output
$\mathbf{a}_\mathcal{A}$ obtained from the NN that is trained by the asymmetric loss in (\ref{eq:custom_loss_2}) with parameter $\alpha$.
\end{definition}

\begin{definition}\label{def:monotone}
Let $f$ be a set function defined as $f: \mathcal{U} \rightarrow \mathbb{R}$, where $\mathcal{U} \subseteq \mathcal{O}$ is a subset of the domain set $\mathcal{O}$, then $f$ is said to be monotonic if: 
\begin{align}
    f(\mathcal{S}) \leq f(\mathcal{A}),  \quad \mathcal{S} \subseteq \mathcal{A}, \quad \mathcal{S},\mathcal{A} \subseteq \mathcal{U}.  
\end{align}
\end{definition}

\begin{proposition}\label{prop: 1}
If $\alpha <1 $ in (\ref{eq:custom_loss_2}), then the reliability according to the Definition \ref{def:rel}, $B(\alpha)$ is greater than or equal to that of the symmetric loss $\alpha=1$. More rigorously, we have: 
\begin{align}
    B(\alpha=1) \leq B(\alpha < 1).
\end{align}
\end{proposition}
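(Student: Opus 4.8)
The plan is to factor the claim into two monotonicities that compose in the right direction: first, that the achievable rate $r_k$ is non-decreasing in the predicted class index $\hat{y}_{max}$, so that over-prediction never lowers reliability; and second, that lowering $\alpha$ below $1$ shifts the network's predicted class upward. Chaining these gives $r_k(\alpha<1) \ge r_k(\alpha=1)$ for each channel realization, and taking the expectation in Definition~\ref{def:rel} then yields $B(\alpha=1) \le B(\alpha<1)$.

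For the first monotonicity I would exploit the structure of the $N=8$ configuration classes generated by the fixed-column baseline: they are nested by cardinality, so a higher class index $j'$ corresponds to an activation set $\mathcal{A}(j')$ containing $\mathcal{A}(j)$ for every $j \le j'$. Viewing $r_k$ as a set function of the active-antenna set and invoking Definition~\ref{def:monotone}, one obtains $r_k(\mathcal{A}(j)) \le r_k(\mathcal{A}(j'))$ whenever $j \le j'$; that is, $r_k$ is non-decreasing in $\hat{y}_{max}$. Since the label $y_{max}$ is by construction the smallest configuration meeting the QoS threshold, any prediction $\hat{y}_{max} \ge y_{max}$ already guarantees $r_k \ge r_{min}$, which is precisely the event the asymmetric loss is meant to promote.

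For the second monotonicity, which is the real workhorse, I would analyze the asymmetric term of the risk at the level of its pointwise minimizer, using the softargmax relaxation of (\ref{eq:softargmax}) to treat $\hat{y}_{max}=t$ as a continuous decision. Writing the conditional risk given an input as $L(t,\alpha) = \mathbb{E}[(y_{max}-t)^2\mathbf{1}\{y_{max}>t\}] + \alpha\,\mathbb{E}[(y_{max}-t)^2\mathbf{1}\{y_{max}\le t\}]$, the first-order condition reads $P(t^*)=\alpha\,N(t^*)$, where $P(t)=\mathbb{E}[(y_{max}-t)\mathbf{1}\{y_{max}>t\}]$ is non-increasing in $t$ and $N(t)=\mathbb{E}[(t-y_{max})\mathbf{1}\{y_{max}\le t\}]$ is non-decreasing in $t$. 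Hence $P/N$ is non-increasing in $t$, so the stationary point $t^*(\alpha)$ is non-increasing in $\alpha$; equivalently $\partial^2 L/\partial t\,\partial\alpha = 2N(t) \ge 0$, so $L$ has increasing differences and a standard monotone-comparative-statics argument gives $t^*(\alpha<1) \ge t^*(\alpha=1)$. Because the categorical cross-entropy component of (\ref{eq:custom_loss}) is independent of $\alpha$, it contributes only an $\alpha$-invariant shift and cannot reverse this ordering.

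Putting the pieces together: at the risk-minimizing predictor, lowering $\alpha$ raises $\hat{y}_{max}$ (second step), and a higher $\hat{y}_{max}$ raises $r_k$ (first step), so $r_k(\alpha<1)\ge r_k(\alpha=1)$ realization by realization, and the expectation in Definition~\ref{def:rel} preserves the inequality. The main obstacle I anticipate is rigorously transferring the comparative-statics statement from the idealized pointwise risk minimizer to the actually trained network, whose output lives in a restricted function class and is driven by the combined loss (\ref{eq:custom_loss}) rather than by the asymmetric term alone. The cleanest route is to argue in the large-capacity regime, where the trained classifier approaches the Bayes-optimal decision rule for the total loss, so that the $\alpha$-monotonicity of the population minimizer carries over to the deployed predictor; one then confirms that the discrete $\mathrm{argmax}$ output inherits the ordering established for its softargmax surrogate.
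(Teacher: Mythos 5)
Your overall decomposition is exactly the paper's: (i) the user rate is monotone in the set of active antennas, and (ii) taking $\alpha<1$ biases the classifier toward larger predicted classes; chaining the two and taking the expectation in Definition~\ref{def:rel} gives the claim. Your treatment of (ii) is in fact \emph{more} rigorous than the paper's own Lemma~\ref{lem:1}, which argues only informally that the gradient ``gears the weights toward choosing more antennas'' when over-prediction is penalized less. You instead set up the conditional risk $L(t,\alpha)$, verify increasing differences via $\partial^2 L/\partial t\,\partial\alpha = 2N(t)\ge 0$, and conclude $t^*(\alpha<1)\ge t^*(\alpha=1)$ by monotone comparative statics; you also honestly flag the remaining gap of transferring this from the population risk minimizer to the trained network, a gap the paper's argument shares (in a worse form, since it never leaves the heuristic level).

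There is, however, a genuine gap in your step (i). You write that ``viewing $r_k$ as a set function of the active-antenna set and invoking Definition~\ref{def:monotone}, one obtains $r_k(\mathcal{A}(j))\le r_k(\mathcal{A}(j'))$.'' Definition~\ref{def:monotone} only \emph{defines} what monotonicity means; it does not assert that the rate function has this property, and the property is not automatic --- proving it is the main technical content of the paper's proof, occupying its Lemmas~\ref{lem:2} and~\ref{lem:3}. For zero-forcing, the rate is governed by $b_k = 1/[(\mathbf{H}\mathbf{H}^H)^{-1}]_{k,k}$ as in \eqref{eq:sinr_zf_bk}, and since the beamforming vectors themselves change when an antenna is added, monotonicity is a nontrivial statement about diagonal entries of inverses of Gram matrices: the paper needs the Sherman--Morrison identity \eqref{eq:sherman_formula} to show that the rank-one update $\mathbf{G}_\mathcal{A} = \mathbf{G}_\mathcal{S} + \mathbf{h}_m\mathbf{h}_m^H$ can only decrease every diagonal entry of the inverse. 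For Eigen-beamforming, the paper needs the Cauchy interlace theorem to show that the top eigenvalue of the per-user covariance can only grow when an antenna element is added. Your proposal becomes complete once you supply one of these arguments (matching the beamformer actually used); the nestedness of the fixed-column classes, which you correctly identify, then lets you iterate the single-antenna-addition result along the chain $\mathcal{A}(j)\subseteq\mathcal{A}(j')$ to get monotonicity in the class index.
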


\begin{proof}[Proof of Proposition \ref{prop: 1}]
We split the proof into two parts. At first, we show that when $\alpha < 1$ the asymmetric loss leads to the choice of more antennas, in expectation. In the second part, we prove that activating more antennas results in higher throughput using the monotonicity of the throughput function with respect to the cardinality of the selected antenna subset. We consider both zero-forcing beamforming and Eigen-beamforming for the rate monotonicity proof.

\begin{lemma}\label{lem:1}
For $\alpha<1$, the loss defined in (\ref{eq:custom_loss_2}), leads to more or equal antennas being chosen in expectation in comparison to $\alpha = 1$: 
\begin{align}
    \mathbb{E} \Big[\|\mathbf{a}^{\alpha=1}_{\mathcal{A}}\|_0 \Big] \leq \mathbb{E} \Big[\|\mathbf{a}^{\alpha < 1}_{\mathcal{A}}\|_0 \Big],
\end{align}

where $\|\mathbf{a}^{\alpha=1}_{\mathcal{A}}\|_0$ and $\|\mathbf{a}^{\alpha < 1}_{\mathcal{A}}\|_0$ are the number of active antennas selected by an NN which is trained using the loss from \eqref{eq:custom_loss_2} with $\alpha=1$ and $\alpha < 1$ respectively.
\end{lemma}

\begin{proof}[Proof of Lemma \ref{lem:1}]
Let's consider only the asymmetric part of the loss, which is given by \eqref{eq:custom_loss_2}. During the training, the gradient of the loss with respect to the weight parameters of the NN is computed. The gradient changes the parameters such that the prediction error decreases as much as possible. 
If $\alpha = 1$, then we have a symmetrical loss (MSE), i.e. the error of classification has no preference to select a larger subset of antennas than the label, or a smaller one.
However, in case of $\alpha<1$, if the predicted output $\hat{y}_{max}$ erroneously reports a higher number of antennas than the label, i.e. $\hat{y}_{max} > y_{max}$, due to the $\alpha<1$ we penalize this type of error less than the case where $\hat{y}_{max} < y_{max}$. Therefore, the weights of the NN are geared toward choosing more antennas than the label in case of an error, as it is a smaller increase in the loss. Thus,
\begin{align}
    \mathbb{E} \Big[\|\mathbf{a}^{\alpha=1}_{\mathcal{A}}\|_0 \Big] \leq \mathbb{E} \Big[\|\mathbf{a}^{\alpha < 1}_{\mathcal{A}}\|_0 \Big].
\end{align}
\end{proof}

\begin{lemma}
\label{lem:2}
The user rate function $r_k (\mathbf{H}_{\mathcal{A}}, \mathbf{W}_{zf, \mathcal{A}}, \mathbf{a}_{\mathcal{A}})$
is monotonic with respect to Definition \ref{def:monotone} when zero-forcing beamforming is used: 
\end{lemma}

\begin{proof}[Proof of Lemma \ref{lem:2}]
Given the antenna indices set $\mathcal{M} = \{1,2,...,M \}$, let $\mathbf{H}_\mathcal{S} = \mathbf{H} \mathbf{A}_\mathcal{S} \in \mathbb{C}^{ K \times M}$ be the channel matrix when the antenna element set $\mathcal{S} \subseteq \mathcal{M}$ is activated where $\mathbf{A}_\mathcal{S} = \mathnormal{diag} (\mathbf{a}_\mathcal{S}) $ is the diagonal antenna activation matrix with $\mathbf{a}_\mathcal{S} = [a_1,....,a_M]^T \in \mathbb{C}^{M \times 1}, \quad a_i = \{1: i \in \mathcal{S}, 0: i \notin \mathcal{S} \}$. Then, let $\mathbf{h}_m = [h_{1,m},..., h_{K,m}]^T \in \mathbb{C}^{ K \times 1}$ be the channel vector corresponding to an additional antenna element $m \in \mathcal{M}$, $m \notin \mathcal{S}$, and $\mathbf{H}_\mathcal{A} \in \mathbb{C}^{ K \times M}$ be the new channel matrix with the addition of new antenna $m$ where $\mathcal{A} = \mathcal{S} \cup \{ m \}$.

To show that the rate function is monotonic, we need to show that 

\begin{equation}
\label{eq:rate_inequality}
    r^\mathcal{S}_{k,zf}(\mathbf{H}_\mathcal{S}, \mathbf{W}_{zf,\mathcal{S}}, \mathbf{a}_{\mathcal{S}}) \leq  r^\mathcal{A}_{k,zf}(\mathbf{H}_\mathcal{A}, \mathbf{W}_{zf,\mathcal{A}}, \mathbf{a}_{\mathcal{A}}), \quad \forall k,
\end{equation}

\noindent where $r^\mathcal{S}_{k,zf}(\cdot)$ and $r^\mathcal{A}_{k,zf}(\cdot)$ are the rates of user $k$ when the set of antennas $\mathcal{S}$ and $\mathcal{A}$ are selected respectively. According to \eqref{eq:sinr_zf_basic} - \eqref{eq:sinr_zf_bk} it is clear that the user rates are proportional to $b_k = \frac{1}{[(\mathbf{H} \mathbf{H}^H)^{-1}]_{k,k}}$. Thus, proving \eqref{eq:rate_inequality} is equivalent to

\begin{equation}
\label{eq:HH_inequality}
\begin{aligned}
\frac{1}{[(\mathbf{H}_\mathcal{S} \mathbf{H}_\mathcal{S}^H)^{-1}]_{k,k}} & \leq \frac{1}{[(\mathbf{H}_\mathcal{A} \mathbf{H}_\mathcal{A}^H)^{-1}]_{k,k}}, \quad \forall k,
\end{aligned}
\end{equation}
which could be further simplified to: 
\begin{align}\label{eq:HH_inequality1}
[\mathbf{G}_\mathcal{A}^{-1}]_{k,k} & \leq [\mathbf{G}_\mathcal{S}^{-1}]_{k,k}, 
\end{align}

\noindent where $\mathbf{G}_\mathcal{S} := \mathbf{H}_\mathcal{S} \mathbf{H}_\mathcal{S}^H$ and $\mathbf{G}_\mathcal{A} := \mathbf{H}_\mathcal{A} \mathbf{H}_\mathcal{A}^H$ and $ \mathbf{G}_\mathcal{A} =  \mathbf{G}_\mathcal{S} + \mathbf{h}_m \mathbf{h}_m^H$.

From Sherman–Morrison formula \cite{1177729959_sherman_morrison}, \cite{1177729893_sherman_morrison_1}, the relationship between matrix inverses of two invertible matrices $\mathbf{A}, \mathbf{B} \in \mathbb{C}^{n \times n}$ which has the format $\mathbf{B} = \mathbf{A} + \mathbf{a}\mathbf{b}^H$ for some column vectors $\mathbf{a}, \mathbf{b} \in \mathbb{C}^{n}$ under the condition $(1 + \mathbf{b}^H\mathbf{A}^{-1}  \mathbf{a}) \neq 0$ is given by

\begin{equation}
    \label{eq:sherman_formula}
    \mathbf{B}^{-1} = (\mathbf{A} + \mathbf{a}\mathbf{b}^H )^{-1} =  \mathbf{A}^{-1} - \frac{\mathbf{A}^{-1} \mathbf{a} \mathbf{b}^H \mathbf{A}^{-1}}{ 1 + \mathbf{b}^H\mathbf{A}^{-1}  \mathbf{a}}.
\end{equation}

Using \eqref{eq:sherman_formula} above, we can obtain the following relationship between $\mathbf{G}_\mathcal{S}^{-1}$ and $\mathbf{G}_\mathcal{A}^{-1}$ as,

\begin{equation}
\label{eq:matrix_inverse_proof0}
\begin{aligned}
   &  \mathbf{G}_\mathcal{A}^{-1} = 
    \mathbf{G}_\mathcal{S}^{-1} -  \frac{\mathbf{G}_\mathcal{S}^{-1} \mathbf{h}_m \mathbf{h}_m^H \mathbf{G}_\mathcal{S}^{-1}}{ 1 + \mathbf{h}_m^H \mathbf{G}_\mathcal{S}^{-1}  \mathbf{h}_m}, \\
& \mathbf{G}_\mathcal{S}^{-1} - \mathbf{G}_\mathcal{A}^{-1}= \frac{\mathbf{G} \mathbf{G}^H}{ 1 + \mathbf{h}_m^H \mathbf{G}_\mathcal{S}^{-1}\mathbf{h}_m},
\end{aligned}
\end{equation}

\noindent where, $\mathbf{G}:= \mathbf{G}_\mathcal{S}^{-1} \mathbf{h}_m \in \mathbb{C}^{K \times K}$, and $\mathbf{G}\mathbf{G}^H$ is a positive semi-definite (PSD) matrix by construct. The denominator $ 1 + \mathbf{h}_m^H\mathbf{G}_\mathcal{S}^{-1} \mathbf{h}_m > 0$, since $\mathbf{G}_\mathcal{S}^{-1}$ is a PSD matrix. The diagonal elements corresponding to user $k=1,2,..,K$ can be presented from \eqref{eq:matrix_inverse_proof0} as

\begin{align} \label{eq:matrix_inverse_proof}
0\leq \frac{[\mathbf{G} \mathbf{G}^H]_{k,k}}{ 1 + \mathbf{h}_m^H \mathbf{G}_\mathcal{S}^{-1}\mathbf{h}_m} &= [\mathbf{G}_\mathcal{S}^{-1}]_{k,k} - [\mathbf{G}_\mathcal{A}^{-1}]_{k,k}, \;\;\forall k,\\
\label{eq:matrix_inverse_proof1}
 [\mathbf{G}_\mathcal{A}^{-1}]_{k,k}  & \leq  [\mathbf{G}_\mathcal{S}^{-1}]_{k,k}, \quad \forall k,
\end{align}
\noindent where the inequality in \eqref{eq:matrix_inverse_proof} follows from $\mathbf{G}^{-1}_\mathcal{S}$ and $\mathbf{G}$ are both PSD matrices and the diagonal elements of a PSD matrix are non-negative. 
In \eqref{eq:matrix_inverse_proof1} we obtain the claim \eqref{eq:HH_inequality1}, and the proof is complete. Thus, when using zero-forcing beamforming, the function user rate is monotonic with respect to Definition \ref{def:monotone}.

\end{proof}

\begin{lemma}
\label{lem:3}
The user rate function $r_k (\mathbf{H}_{k, eig, \mathcal{A}}, \mathbf{W}_{k, eig, \mathcal{A}}, \mathbf{a}_{\mathcal{A}})$
is monotonic with respect to Definition \ref{def:monotone} when Eigen-beamforming is used: 
\end{lemma}

\begin{proof}[Proof of Lemma \ref{lem:3}]
Let $\mathbf{W}_{k,eig,\mathcal{S}}, \mathbf{W}_{k,eig,\mathcal{A}} \in \mathbb{C}^{M \times L_k}$ be the Eigen-beamforming transmitter matrices of user $k$ when antenna element sets $\mathcal{S}$ and $\mathcal{A}$ are selected respectively. Without loss of generality, let's assume each user has a single data stream (i.e., $L_k = 1, \quad \forall k$). The corresponding per-user channel matrices are denoted by $\mathbf{H}_{k,\mathcal{S}}, \mathbf{H}_{k,\mathcal{A}} \in \mathbb{C}^{N_k \times M}$. From \eqref{eq:sinr} and \eqref{eq:error_cov_final_eigbf}, per-user SINR when using antenna element sets $\mathcal{S}$ and $\mathcal{A}$ (defined previously) should hold the following condition to satisfy the rate monotonicity.

\begin{equation}
\label{eq:rate_inequality_eigbf}
\begin{aligned}
    & r^\mathcal{S}_{k,eig}(\mathbf{H}_{k,\mathcal{S}}, \mathbf{W}_{k,eig,\mathcal{S}}, \mathbf{a}_{\mathcal{S}}) \leq  \\
    & \qquad \qquad \qquad r^\mathcal{A}_{k,eig}(\mathbf{H}_{k,\mathcal{A}}, \mathbf{W}_{k,eig,\mathcal{A}}, \mathbf{a}_{\mathcal{A}}), \quad \forall k, \\
    & \qquad \qquad \quad tr \Big( \mathbf{E}_{k,eig}^\mathcal{S} \Big) \geq  tr \Big( \mathbf{E}_{k,eig}^\mathcal{A} \Big). \\
\end{aligned}
\end{equation}

From the definition of eigenvalue decomposition, any eigenvalue $\lambda$ and its corresponding eigenvector $\mathbf{w}$ of the channel covariance matrix $\mathbf{R}_k = \mathbf{H}_k^H \mathbf{H}_k$ follows the condition $ \mathbf{R}_k \mathbf{w} = \lambda \mathbf{w}$. Also, the noise covariance matrix $\mathbf{R}_{n_k} = \sigma^2 \mathbf{I}$ and we consider equal power transmission for all the users.  Therefore, we can simplify the MSE expressions as, 

\begin{equation}
\label{eq:eigbf_proof_1}
\begin{aligned}
    & tr \big( \mathbf{E}_{k,eig}^\mathcal{S} \big) = tr \big( \big[  \mathbf{I}_{L_k} + \\
    & \hspace{20mm} \mathbf{W}_{k,eig,\mathcal{S}}^H \mathbf{H}_{k,\mathcal{S}}^H  \mathbf{R}_{n_k}^{-1} \mathbf{H}_{k,\mathcal{S}} \mathbf{W}_{k,eig,\mathcal{S}} \big]^{-1} \big), \\
    & = tr \Big( \big[  \mathbf{I}_{L_k} + 1/\sigma^2 (\lambda_{max,\mathcal{S}} \mathbf{W}_{k,eig,\mathcal{S}}^H \mathbf{W}_{k,eig,\mathcal{S}}) \big]^{-1} \Big), \\
    & = \frac{1}{\big(  1 + \frac{P \lambda_{max,\mathcal{S}}}{\sigma^2}   \big)}, \\
\end{aligned}
\end{equation}

\vspace{-1mm}

\begin{equation}
\label{eq:eigbf_proof_2}
\begin{aligned}
    & tr \big( \mathbf{E}_{k,eig}^\mathcal{A} \big) = tr \big( \big[  \mathbf{I}_{L_k} + \\
    & \hspace{20mm} \mathbf{W}_{k,eig,\mathcal{A}}^H \mathbf{H}_{k,\mathcal{A}}^H  \mathbf{R}_{n_k}^{-1} \mathbf{H}_{k,\mathcal{A}} \mathbf{W}_{k,eig,\mathcal{A}} \big]^{-1} \big), \\
    & = tr \Big( \big[ \mathbf{I}_{L_k} + 1/\sigma^2 (\lambda_{max,\mathcal{A}} \mathbf{W}_{k,eig,\mathcal{A}}^H \mathbf{W}_{k,eig,\mathcal{A}}) \big]^{-1} \Big), \\
    & = \frac{1}{\big(  1 + \frac{P \lambda_{max,\mathcal{A}}}{\sigma^2} \big)}   , \\
\end{aligned}
\end{equation}

\noindent where $\lambda_{max,\mathcal{S}}, \lambda_{max,\mathcal{A}}$ are the maximum eigenvalues of the channel covariance matrices $\mathbf{R}_{k,\mathcal{S}} = \mathbf{H}_{k,\mathcal{S}}^H \mathbf{H}_{k,\mathcal{S}}$ and $\mathbf{R}_{k,\mathcal{A}} = \mathbf{H}_{k,\mathcal{A}}^H \mathbf{H}_{k,\mathcal{A}}$ and $ \mathbf{W}_{k,eig,\mathcal{S}}$ and $ \mathbf{W}_{k,eig,\mathcal{A}}$ are the Eigen-beamforming transmitter matrices corresponding to those maximum eigenvalues. The Cauchy interlace theorem states that the eigenvalues $\lambda_1 \leq ... \leq \lambda_n$ of a Hermitian matrix of order $n$ are interlaced with the eigenvalues $\mu_1 \leq ... \leq \mu_{(n-1)}$
of a principal sub-matrix of order $(n-1)$, i.e., $\lambda_{k} \leq \mu_{k} \leq \lambda_{k+1}, k=1,...,(n-1)$ \cite{10.2307/4145217_cauchy_interlace}. Thus, for the maximum eigenvalues of $\mathbf{R}_{k,\mathcal{S}}$ and $\mathbf{R}_{k,\mathcal{A}}$ we have $\lambda_{max,\mathcal{S}} \leq \lambda_{max,\mathcal{A}}$ \footnote{If we remove the zero-valued rows and columns (which arise from the inactive antenna elements) from the two covariance matrices $\mathbf{R}_{k,\mathcal{S}}$ and $\mathbf{R}_{k,\mathcal{A}}$, we can see that the resulting matrix $\mathbf{\tilde{R}}_{k,\mathcal{S}}$ is a principal sub-matrix of $\mathbf{\tilde{R}}_{k,\mathcal{A}}$. Then by applying the Cauchy interlace theorem to them we have $\lambda_{max,\mathcal{S}} \leq \lambda_{max,\mathcal{A}}$ for the maximum eigenvalues of $\mathbf{\tilde{R}}_{k,\mathcal{S}}$ and $\mathbf{\tilde{R}}_{k,\mathcal{A}}$. Since the additional zero-valued columns and rows don't affect the eigenvalue decomposition of $\mathbf{R}_{k,\mathcal{S}}$ and $\mathbf{R}_{k,\mathcal{A}}$, they will also have the same set of eigenvalues as those of $\mathbf{\tilde{R}}_{k,\mathcal{S}}$ and $\mathbf{\tilde{R}}_{k,\mathcal{A}}$ respectively, and the previous result $\lambda_{max,\mathcal{S}} \leq \lambda_{max,\mathcal{A}}$ will still hold.}. Therefore, from \eqref{eq:eigbf_proof_1} and (\ref{eq:eigbf_proof_2}) we have,

\begin{equation}
\label{eq:eigbf_proof_3}
\begin{aligned}
     \frac{\sigma^2}{\big(  \sigma^2 + P \lambda_{max,\mathcal{S}}  \big)} & \geq \frac{\sigma^2}{\big(  \sigma^2 + P \lambda_{max,\mathcal{A}}  \big)}, \\
    tr \Big( \mathbf{E}_{k,eig}^\mathcal{S} \Big) & \geq tr \Big( \mathbf{E}_{k,eig}^\mathcal{A} \Big). 
\end{aligned}
\end{equation}

Thus, \eqref{eq:eigbf_proof_3} satisfies the requirement in \eqref{eq:rate_inequality_eigbf}, and therefore it is proved that when using Eigen-beamforming, the function user rate is monotonic with respect to Definition \ref{def:monotone}. \\
\vspace{-2mm}
\end{proof}

\begin{figure}[ht]
\centerline{\includegraphics [width=0.48\textwidth]{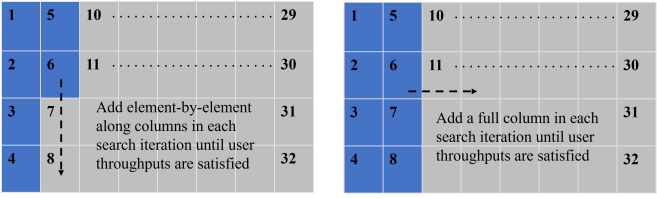}}
\caption{(a). Sequential antenna muting method.  (b). Fixed column antenna muting method.}
\label{fig:antenna_config}
\end{figure}

According Lemma \ref{lem:1}, \ref{lem:2} and \ref{lem:3} we have
\begin{align}
    &\mathbb{E} \Big[\|\mathbf{a}^{\alpha=1}_{\mathcal{A}}\|_0 \Big] \leq \mathbb{E} \Big[\|\mathbf{a}^{\alpha < 1}_{\mathcal{A}}\|_0 \Big], \\
    &\mathbb{E} \Big [ r_k (\mathbf{H}_{k, \mathcal{A}}, \mathbf{W}_{k, \mathcal{A}}, \mathbf{a}_{\mathcal{A}}, \alpha=1) \Big ] \leq \nonumber \\ 
    &\hspace{2cm} \mathbb{E} \Big [ r_k (\mathbf{H}_{k, \mathcal{A}}, \mathbf{W}_{k, \mathcal{A}}, \mathbf{a}_{\mathcal{A}}, \alpha<1) \Big ],\\
    &\mathbb{E} \Big [ r_k (\mathbf{H}_{k, \mathcal{A}}, \mathbf{W}_{k, \mathcal{A}}, \mathbf{a}_{\mathcal{A}}, \alpha=1) - r_{min}\Big ] \leq \nonumber \\ 
    &\hspace{2cm}  \mathbb{E} \Big [ r_k (\mathbf{H}_{k, \mathcal{A}}, \mathbf{W}_{k, \mathcal{A}}, \mathbf{a}_{\mathcal{A}}, \alpha<1) -r_{min}\Big ].
\end{align}

The proof of Proposition \ref{prop: 1} is completed.

\end{proof}

\section{Heuristic TAM}
\label{section_heuristics}

In this section, we present the newly proposed greedy TAM algorithm and two other low-complexity heuristic algorithms for solving the optimization problem in \eqref{eq:min_elements}. The ``fixed column antenna muting" algorithm proposed in Section \ref{section_heuristics_B} is used to generate labels for training the NN in NAM proposed in Section \ref{section_heuristics_B}.

\subsection{Greedy TAM Algorithm}

The greedy TAM algorithm which we propose here iteratively selects the best antenna element which gives the best sum rate performance until all the per-user throughput requirements are satisfied. It starts from a single element that gives the best rate for the co-scheduled users and iteratively adds elements until throughput conditions and the minimum number of antennas are satisfied. For the given iteration, if there are multiple antenna configurations that satisfy throughput conditions, the antenna configuration with the maximum sum rate is selected. If throughput conditions and the minimum number of antennas are not met for the current number of active antenna elements, the antenna element which gives the best sum rate for the co-scheduled users is added to the active antenna element set and element addition is continued. The algorithm repeats this process until both constraints are satisfied. The main steps of the greedy TAM algorithm are summarized in Algorithm 1. It is not an exhaustive search method since it does not search through all the different combinations in the search space, but it is a greedy approach in the sense each time it selects the best antenna element from the subset of remaining elements. Note that this approach allows minimizing the number of active antenna elements as well as optimizing the array shape.

\begin{algorithm}
\caption{Greedy search-based TAM}
\label{alg:antenna_sel}
\begin{algorithmic}
\REQUIRE $\text{channel matrices } \mathbf{H}_k, \quad k = 1,2,...,K $  
\STATE $\mathbf{a\_idc} = [1,2,…,M/2] $
\STATE $\mathbf{a}_\mathcal{A} = \mathbf{0}, \quad \mathbf{a}_\mathcal{A}= [a_1, a_2,..., a_{M/2}], a_i \in \{0, 1\}$

\FOR {$M_i=1,2,\ldots,M/2$}
    \STATE $\mathbf{a\_selected} = [  ]$
    \STATE $\mathbf{remaining\_idc} = diff(\mathbf{a\_idc}, \mathbf{a\_selected})$
    
    \FOR {$i \in \mathbf{remaining\_idc} $}        
    \STATE $\mathbf{a}_\mathcal{I} = [\mathbf{a\_selected}, i]$, $\mathbf{A}_\mathcal{I} = diag([\mathbf{a_\mathcal{I}}, \mathbf{a_\mathcal{I}}]^T)$
    \STATE obtain $\mathbf{H}_{k,\mathcal{I}} = \mathbf{H}_k \mathbf{A}_\mathcal{I}$
    \STATE calculate $\mathbf{R}_{k,avg,\mathcal{I}}, \hspace{2mm} \mathbf{W}_{k,\mathcal{I}}, \hspace{2mm} r_{k,\mathcal{I}}, \hspace{5mm} \quad k = 1,...,K$  
    
    \IF {$r_{min} \leq r_{k,\mathcal{I}} (\mathbf{H}_{k,\mathcal{I}},\mathbf{W}_{k,\mathcal{I}}, \mathbf{a}_\mathcal{I}) \hspace{3mm} \forall k$}
            
        \STATE \text{collect to set } $\mathcal{P}$: 
        \STATE \hspace{2mm} $\text{index}  \hspace{2mm}  i$,   $\mathbf{W}_{k,\mathcal{I}},   r_{k,\mathcal{I}}(\mathbf{H}_{k,\mathcal{I}},\mathbf{W}_{k,\mathcal{I}}, \mathbf{a}_\mathcal{I})$
            
    \ELSE
        \STATE \text{collect to set } $\mathcal{Q}$: 
        \STATE \hspace{2mm} $\text{index}  \hspace{2mm}  i$,   $\mathbf{W}_{k,\mathcal{I}},   r_{k,\mathcal{I}}(\mathbf{H}_{k,\mathcal{I}},\mathbf{W}_{k,\mathcal{I}}, \mathbf{a}_\mathcal{I})$
                       
    \ENDIF
    \ENDFOR

    \IF {$ \mathcal{P} \neq \emptyset$} 
    \STATE $ i_{max} = argmax_j \Big(\sum_{k=1}^K r_{k,\mathcal{I},j}\Big), \hspace{2mm} j\in  \mathcal{P} $
    \STATE $\mathbf{a\_selected} = [\mathbf{a\_selected},  i_{max}]$
        \IF {$ M_{min} \leq M_{i}$} 
        \STATE $ \mathbf{W}_{k, \mathcal{A}} = \mathbf{W}_{k,\mathcal{I}, i_{max}} $ 
        \STATE $ \mathbf{H}_{k, \mathcal{A}} = \mathbf{H}_{k,\mathcal{I}, i_{max}} $ 
        \STATE break
        \ELSE
        \STATE \text{continue adding elements}
        \ENDIF
        
    \ELSE 
    \STATE $ i_{max} = argmax_j \Big(\sum_{k=1}^K r_{k,\mathcal{I},j}\Big), \hspace{2mm} j\in  \mathcal{Q} $
    \STATE $\mathbf{a\_selected} = [\mathbf{a\_selected},  i_{max}]$
    \STATE \text{continue adding elements}
        
    \ENDIF

  	\ENDFOR
  	
  	\STATE $ \text{antenna muting vector} \quad \mathbf{a}_\mathcal{A}[\mathbf{a\_selected}] = 1$
  	
    \STATE \text{reutrn}
    $\mathbf{a}_\mathcal{A}, \mathbf{A}_{\mathcal{A}}, \mathbf{H}_{k,\mathcal{A}},\mathbf{W}_{k,\mathcal{A}}, k=1,2,...,K $

\end{algorithmic}
\end{algorithm}

\begin{algorithm}
\caption{Sequential antenna muting}
\label{alg:antenna_sel_sequential}
\begin{algorithmic}
\REQUIRE $\text{channel matrices } \mathbf{H}_k, \quad k=1,2,...,K $ 
\STATE $\mathbf{a\_idc} = [1,2,…,M/2] $
\STATE $\mathbf{a}_\mathcal{A} = \mathbf{0}, \quad \mathbf{a}_\mathcal{A}= [a_1, a_2,..., a_{M/2}], a_i \in \{0, 1\}$
\FOR {$i=1,2,\ldots,M/2$}
    \STATE $\mathbf{a}_\mathcal{I} = \mathbf{a\_idc}[1:i]$
    $\mathbf{A}_\mathcal{I} = diag([\mathbf{a_\mathcal{I}}, \mathbf{a_\mathcal{I}}]^T)$
    \STATE obtain $\mathbf{H}_{k,\mathcal{I}} = \mathbf{H}_k \mathbf{A}_\mathcal{I}$
    \STATE calculate $\mathbf{R}_{k,avg,\mathcal{I}}, \hspace{2mm} \mathbf{W}_{k,\mathcal{I}}, \hspace{2mm} r_{k,\mathcal{I}}, \hspace{5mm} \quad k = 1,...,K$ 

    \IF {$(r_{min} \leq r_{k,\mathcal{I}}), \hspace{3mm} \forall k  \hspace{2mm} \mathbf{and}   \hspace{2mm} (M_{min} \leq i) $}
        \STATE break
    \ELSE 
        \STATE \text{continue adding elements}
    \ENDIF
\ENDFOR
  	\STATE $ \text{antenna muting vector} \quad \mathbf{a}_\mathcal{A}[\mathbf{a}_i] = 1$
    \STATE \text{return} $\mathbf{a}_\mathcal{A}, \mathbf{A}_{\mathcal{A}}, \mathbf{H}_{k,\mathcal{A}},\mathbf{W}_{k,\mathcal{A}}, k=1,2,...,K $
\end{algorithmic}
\end{algorithm}

\begin{table}[ht]
\caption{Simulation Parameters}
\begin{center}
\begin{tabular}{|l|l|}
\hline
Channel model & 3GPP 38.901 UMi 3D \cite{3gpp_36.873_channel_model} \\ \hline
Inter-site distance & 200 m \\ \hline
Num. of cells & One 120° degree sector\\ \hline
BS antenna configuration & $M = 64 $\\ 
$M =2\times N_{col} \times N_{row}$ &   $(2 \times 8 \times 4) $\\ \hline
Carrier frequency & 3.5 GHz \\ \hline
BS TX Power & 53 dBm \\ \hline
Transmission Bandwidth & 100 MHz (273 PRB)\\ \hline
Waveform & OFDM with 30 kHz subcarrier \\ 
  & spacing\\ \hline
Traffic model & Full buffer  \\ \hline
Number of user antennas  & $N_k = 4, \quad \forall k$  \\ \hline
Number of data streams  & $L_k = 2, \quad \forall k$  \\ \hline
User's receiver model  & Genie MMSE  \\ \hline
User distribution & Uniform random distribution \\  \hline
Avg. number of users  & $J = 10$ \\ 
per-sector  &  \\ \hline
Max. number of  & $K = 4$ \\ 
co-sched. users &  \\ \hline
Link adaptation & Perfect link adaptation based on  \\ 
& truncated Shannon (capped to 8 bits) \\ \hline
Subframe duration & 0.5 ms including 12 DL data symbols \\ 
&and 2 DL DMRS (pilot) symbols \\ \hline
UE speed & 3 km/h \\ \hline
Throughput threshold & $r_{min} = 0.3 Mbits/slot$ \\ \hline
Min. number of active  & $M_{min} = 4$ \\ 
 antennas (per-polarization) & \\ \hline
\end{tabular}
\end{center}
\label{tab:simulation}
\end{table}

\subsection{Heuristic-based TAM}
\label{section_heuristics_B}

Here we define two heuristics for TAM which have lower computational complexity than the greedy antenna muting algorithm. The first method is referred to as ``sequential antenna muting" which is also illustrated in Fig. \ref{fig:antenna_config} (a) where individual elements are added along the antenna panel until the per-user throughput requirements and the minimum number of active antenna requirements are satisfied. We have summarized the basic steps of the sequential antenna muting algorithm in Algorithm 2. It has a reduced complexity compared to greedy-based antenna muting since there is no search for the optimal antenna element in each iteration. A more simpler and straight-forward heuristic is to select an antenna configuration from a set of given fixed set of array configurations (i.e. fixed columns with 4,8,12,...,28,32 active elements per polarization, fixed rows with 4,8,12,...,28,32 active elements per polarization). We use this ``fixed column antenna muting" method which is illustrated in Fig. \ref{fig:antenna_config} (b) to generate the labels for the NN.

\section{Simulations and Results}

In this section, we present the numerical simulations to compare the performance of the proposed NAM algorithm and the heuristic TAM algorithms.

\subsection{Simulation Parameters}

\begin{table}[ht]
\caption{Classification accuracy of the proposed NAM approach}
\begin{center}
\begin{tabular}{|l|l|l|}
\hline
\textbf{Dataset} & \textbf{NN model}  & \textbf{NN model} \\
 & \textbf{trained with (\ref{eq:cross_entropy}) }  & \textbf{retrained with (\ref{eq:custom_loss})} \\ \hline
Training dataset & 88\%  & 83.9\%  \\ \hline
Test dataset & 83.1\%  & 80.9\% \\ \hline

\end{tabular}
\end{center}
\label{tab:dnn_performance_accuracy}
\end{table}

\begin{table}[ht]
\caption{QoS guarantee of the NAM approach where the NN output class is equal or higher than the label class confirming satisfying the QoS}
\begin{center}
\begin{tabular}{|l|l|l|}
\hline
\textbf{Dataset} & \textbf{NN model}  & \textbf{NN model} \\
 & \textbf{trained with (\ref{eq:cross_entropy}) }  & \textbf{retrained with (\ref{eq:custom_loss})} \\ \hline
Training dataset &  93.4\% & 97.3\% \\ \hline
Test dataset &  90.7\% & 95.0\% \\ \hline

\end{tabular}
\end{center}
\label{tab:dnn_performance_qos}
\end{table}

As discussed in Section \ref{section_system_model}, we have considered an MU-MIMO system with a single cell, covering a 120° degree sector. The important simulation parameters are summarized in Table \ref{tab:simulation}. We have used 3GPP's 3D Urban Micro (UMi) channel model \cite{3gpp_36.873_channel_model}. For user scheduling a proprietary algorithm is used where up to $K$ users out of the $J$ users are co-scheduled in each instance.

\subsection{Simulation Setup}

All the simulations including the data generation and antenna muting algorithms are Python-based implementations. The NN model implementation, training, and testing are done in TensorFlow \cite{tensorflow} using the Keras library. For the NN implementation, we have generated a dataset of 500 random user drops with 2000 instances (i.e., slots) in each drop (where each instance corresponds to 0.5 ms of time), resulting in 1 million data samples. It was partitioned into training, validation, and test sets with 80\%, 10\%, and 10\% of the total samples respectively. The training dataset is used for model training, whereas the validation dataset is used for tracking the model training performance and fixing the hyper-parameters. Finally, the model performance is evaluated on the test dataset. As mentioned earlier we used the fixed column antenna muting method to create labels that select one out of the $N=8$ given antenna configurations. We have trained the model for 100 epochs for a mini-batch size of 64 samples with Adam optimizer with a learning rate of 0.001. The performance of the different TAM algorithms discussed in Section \ref{section_heuristics} is evaluated on the above-mentioned test dataset in order to compare their performance against the proposed NAM method.

\subsection{Results and Discussion}

The classification accuracy of the proposed NAM algorithm is summarized in Table \ref{tab:dnn_performance_accuracy} for the training and test datasets. There, accuracy refers to when the NN output class is equal to the label class. Table \ref{tab:dnn_performance_qos} presents the reliability improvement brought by the newly introduced loss in (\ref{eq:custom_loss}), where the QoS guarantee is defined as when the NN output class is equal or higher than the label class which confirms satisfying the throughput constraints in (\ref{eq:min_elements}). We observed that retraining with the new loss function improves the QoS guarantee capability while slightly reducing the accuracy.

\begin{figure}[ht]
\centerline{\includegraphics [width=0.48\textwidth]{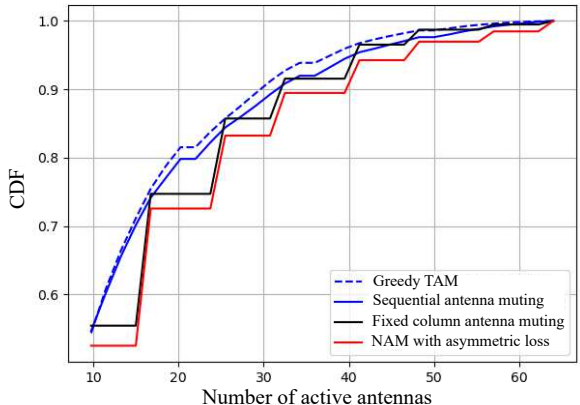}}
\caption{Cumulative distribution of active antenna numbers in different TAM strategies.}
\label{fig:antenna_selection_cdf}
\end{figure}

\begin{figure}[ht]
\centerline{\includegraphics [width=0.52\textwidth]{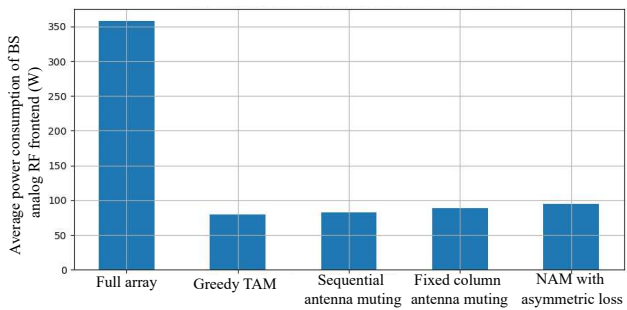}}
\caption{Average power consumption of the BS analog RF frontend module \cite{8644331_nokia} for different TAM approaches.}
\label{fig:avg_antenna_number}
\end{figure}

Fig. \ref{fig:antenna_selection_cdf} shows the cumulative distribution function (CDF) of active antenna numbers in different antenna muting algorithms. It can be observed that the NAM closely follows the curve of the fixed column antenna muting algorithm with only a slight shift. The sequential antenna muting algorithm has a slightly better performance due to the granularity it has in selecting the optimal number of antenna elements compared to the fixed column TAM. As expected, the greedy TAM has the best performance out of all the heuristics. In all methods, more than 50\% of the time only 8 antenna elements are needed to satisfy the user throughput requirements. While full array configuration has all the $M=64$ antennas active at all times, the greedy TAM, sequential antenna addition, and fixed column TAM method only require 14.22, 14.85, and 15.83 active antenna elements respectively on average to satisfy the given QoS requirements. The NAM closely follows its baseline by having 17.00 active antennas on average, which is approximately 73\% energy saving compared to the full array configuration.

\begin{figure}[ht]
\centerline{\includegraphics [width=0.54\textwidth]{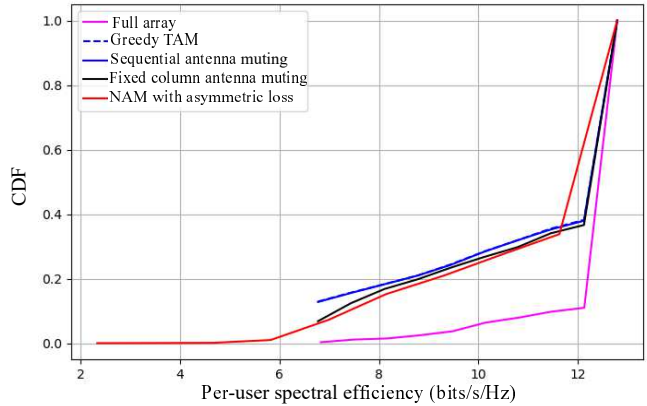}}
\caption{Cumulative distribution of per-user spectral efficiency for different TAM strategies.}
\label{fig:userrates_cdf}
\end{figure}

The estimated average power consumption at the BS analog RF frontend module is shown in Fig. \ref{fig:avg_antenna_number} for the different TAM methods discussed. We have used the power consumption values of the different analog RF frontend building blocks (TX conversion unit, TX power amplifier, RX conversion unit, and RX low noise amplifier) as given in \cite{8644331_nokia} in estimating the power consumption. It is clearly visible that TAM could achieve huge power savings at the RF frontend compared to the full array configuration.

Fig. \ref{fig:userrates_cdf} illustrates the CDFs of the achievable instantaneous per-user spectral efficiency (SE) of different TAM methods. For the simulations, we have used the user rate threshold as $r_{threshold} = 0.3 Mbits/slot $ which translates to $SE_{threshold} = 6.105 bit/s/Hz$. The heuristic-based methods have full reliability where per-user SE is always higher than $SE_{threshold}$, however, the NN has around 95\% of guaranteeing the user QoS requirements.

\section{Complexity Analysis}

In this section, we compare the computational complexity of the heuristic algorithms and the ML-based TAM in detail. First, we consider the main computations involved in the heuristic algorithms. Let $M_i$ be the number of active antenna elements in each search round of a given heuristic algorithm. For a single iteration with $M_i$ antenna elements, the calculations involve computing the channel covariance matrix, obtaining the Eigen-beamforming vector, and user rate calculation to check if the user rate requirements are satisfied. We have to perform all these steps $K$ times to do the rate calculation for all the scheduled users. Channel covariance matrix computation requires $\approx M_i^3/8 $ floating point operations (FPOs) complexity (assuming that the channel covariance matrix is obtained from the channel averaged over both polarizations). Computation of maximum eigenvalue and its corresponding eigenvector needs $\approx(M_i^3/8 + (M_i/2   log2   (M_i/2))   log b)$ FPOs, where $2^{-b}$ is the relative error bound accepted for the eigenvalues \cite{eig_complexity}, which we approximate as $(M_i^3/8)$ in the FPO calculations. Then the rate calculation step involves several matrix multiplications, additions, and inverse operations resulting in a total complexity of $\text{PRBs} \times (N_k M_i L_k + N_k^2 L_k + N_k L_k^2 + L_k^3)$ where $\text{PRBs}=273$ is the number of physical resource blocks (PRBs), $N_k$ is the number of user antennas, and $L_k=2$ corresponds to PDSCH rank 2 transmissions. Table \ref{tab:complexity} summarizes the required FPOs per channel instance in each of the three heuristics assuming a uniform distribution of $N$ antenna classes, and the FPO count for the NN. The total FPO count for a single iteration in heuristic algorithms which consists of the above-mentioned steps is taken as $F_i$ as given in \eqref{eq:flop_count_fi} below, which is used in calculations in Table \ref{tab:complexity}.

\begin{equation}
\begin{aligned}
\label{eq:flop_count_fi}
    F_i & = K \times \Big [ (M_i^3/8) + (M_i^3/8) + \\
    & \quad \text{PRBs} \times (N_k M_i L_k + N_k^2 L_k + N_k L_k^2 + L_k^3) \Big ]. \\
\end{aligned}
\end{equation}

In the fixed column antenna muting algorithm, the search for the optimal antenna configuration is done to find the optimal $M^* \in \{8,16,24,...,64 \}$ transmit antennas, starting from the lowest value and iterating over the 8 options until the per-user rate requirements are satisfied. Let $F_i \in \{F_0, F_{1},.., F_{7}\}$ be the total number of FPOs required when the active antenna number is $M_i \in \{8,16,24,...,64\}$ respectively. It should be noted that the number of iterations to find the correct antenna configuration $M^*$ is dependent on the user distribution and the channels between the BS and the users. For a fair estimation of FPO count, let us assume a uniform distribution of $N$ antenna classes, i.e., $P(i) = 1/N, \forall i$. The average number of FPOs required for a single instance is $\sum_{i=0}^{N-1} \sum_{j=0}^i P(j) F_i = \sum_{i=0}^{N-1} (i/N ) F_i$.

We can also obtain the average number of FPOs required for the sequential antenna muting algorithm in a similar manner, as $\sum_{i=1}^{M/2} \sum_{j=1}^i P(j) F_i = \sum_{i=1}^{M/2} (2i/M) F_i$ where $F_i \in \{F_{1}, F_{2},.., F_{M/2} \}$ be the total number of FPOs required when the active antenna number is $M_i \in \{2,4,...,M\}$ respectively, and $P(i) = 2/M, \forall i$ for a uniform distribution of antenna element selection. The greedy TAM algorithm has a much higher computational complexity due to the two loops in Algorithm 1. Following the previous notations and assumptions, we can calculate the total number of FPOs as $\sum_{i=1}^{M/2} \sum_{j=1}^i P(j) (M/2+1-i) F_i = \sum_{i=1}^{M/2} 2i/M  (M/2+1-i) F_i$. The extra $(M/2+1-i)$ factor is due to the inner \textit{for} loop in Algorithm 1 where the best element from the remaining set of antennas is selected for a given candidate antenna number $M_i$.

On the other hand, in NAM the NN has a fixed computational complexity which is only dependent on $M$ and output classes $N$. The computational complexity of a  convolutional layer and a dense layer is as follows. In order to compute the output of a convolutional layer with input size $[x_1, x_2, n_i]$, using a kernel of size $[a, b]$ with $n_k$ channels, it requires $[(a b n_i n_k) (x_1 - a + 1) (x_2 - b + 1)]$ multiplications and $[(a b n_i - 1 + 1) (x_1 - a + 1) (x_2 - b + 1) n_k]$ additions. Thus, a convolutional layer requires a total of $[2 (a b n_i n_k) \times (x_1 - a + 1) (x_2 - b + 1)]$ number of FPOs. A dense layer with $A$ neurons and $B$ input dimension requires $A B$ multiplications. It requires $(B-1)$ additions for adding the multiplied values and then 1 more addition to add the bias in each neuron, thus resulting in total $A B$ additions per layer. Therefore, the total number of FPOs for a dense layer adds up to $2AB$. Apart from the computations to produce the NN output, the proposed NAM algorithm has an extra one-time computational complexity to prepare the input; specifically to produce the channel covariance matrix and the Eigen-beamforming vectors of $K$ users considering the full antenna configuration. Channel covariance matrix calculation and Eigen-beamforming vector calculation each have a $\mathcal{O}(M^3)$ complexity. Furthermore, each convolutional layer has $(abn_i + 1) n_k$ number of learnable parameters, and each dense layer has $A(B+1)$ learnable parameters which give an idea about the training complexity of the model.

\begin{table}[ht]
\caption{Computational complexity of different algorithms}
\begin{center}
\begin{tabular}{|l|l|}
\hline
\textbf{Algorithm} & \textbf{Computational complexity (Number of FPOs} \\ 
& \textbf{per channel instance)} \\ \hline
Fixed column & $ \sum_{i=0}^{N-1} i/N \times F_i$, \\ 
antenna muting & where $F_i \in \{F_0, F_{1},.., F_{7}\}$ when the active \\
& antenna number is $M_i \in \{8,16,24,...,64\}$ \\ \hline

Sequential antenna & $ \sum_{i=1}^{M/2} 2i/M \times F_i$, \\ 
muting & where $F_i \in \{F_{1}, F_{2},.., F_{M/2} \}$ when the active \\
& antenna number is $M_i \in \{2,4,...,M\}$ \\ \hline

Greedy TAM & $\sum_{i=1}^{M/2} 2i/M \times (M/2+1-i) \times F_i$, \\ 
& where $F_i$ is the same as in seq. antenna muting \\ \hline

NAM: convolutional & $2 (a b n_i n_k) \times (x_1 - a + 1) \times (x_2 - b + 1)$ \\
layer & (input size $[x_1, x_2, n_i]$, kernal size $[a, b, n_k]$) \\ \hline
NAM: dense layer & $2 \times A \times B$ ($A$ neurons and $B$ input dimension) \\ \hline

\end{tabular}
\end{center}
\label{tab:complexity}
\end{table}

\begin{figure}[ht]
\centerline{\includegraphics [width=0.52\textwidth]{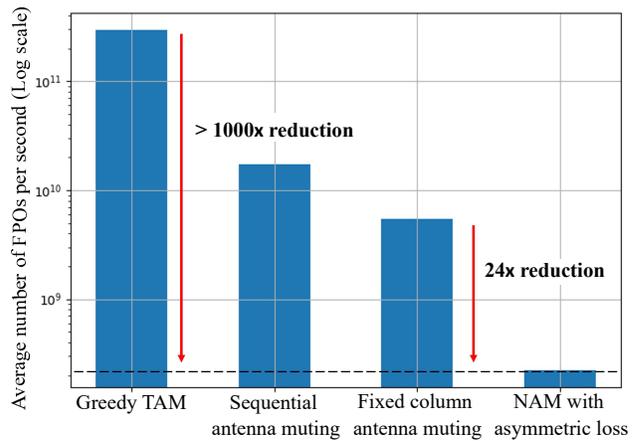}}
\caption{Average number of FPOs (per second) required for different TAM strategies (y-axis is in log scale).}
\label{fig:avg_flop_count}
\end{figure}

Fig. \ref{fig:avg_flop_count} shows the number of FPOs (per second) in average spent to obtain the antenna muting result in NAM and different heuristic TAM algorithms when BS is having $M=64$ antennas and assuming a uniform distribution of antenna classes. The greedy TAM algorithm has the highest complexity which is more than $1000\times$ computationally expensive than the NN. Furthermore, the proposed NAM approach is around $24\times $ faster than its baseline fixed column TAM strategy, making NAM a potential solution for practical implementation.

\section{Conclusion}

In this paper, we have exploited the potential of TAM in achieving energy savings in massive MIMO networks. We have proposed NAM, an ML-based solution to learn TAM in a data-driven manner to minimize the BS power consumption by activating a minimum number of antennas while guaranteeing the user QoS requirements. Furthermore, we have presented a greedy-based TAM algorithm and several other heuristics and compared them along with the performance and computational complexity of the proposed NAM approach. In NAM, a NN is trained in a supervised manner to learn the TAM given from a baseline algorithm. We have introduced a new custom loss function that improves the reliability of the NN output in achieving the QoS requirements. The proposed NAM solution achieves $ \sim73\%$ energy saving compared to the full antenna configuration in the BS with $\sim95\%$ reliability in achieving the user throughput requirements while being around $24\times$ less computationally intensive than its counterpart fixed column TAM strategy.

While we have not done extensive investigations on optimizing the NN model architecture, hyper-parameter tuning, and tuning the parameters $\lambda$ and $\alpha$ in the custom loss function, the potential of the NN approach is still evident even with the currently used configurations. Further optimizations could result in achieving improved reliability and further reductions in computational complexity. The next steps in this work would be to investigate its scalability when the number of antennas $M$ is increasing and to implement an unsupervised learning approach to enable the NN to learn better antenna selections than the sub-optimal baseline labels that are currently used. Furthermore, given that about $50\%$ of the complexity in NAM is due to the input preparation which involves channel covariance matrix calculation and Eigen-beamforming computation, further complexity reductions could be achieved if a better input could be used for the NN or if the input preparation could be simplified by using some approximations.

 \section{Acknowledgement}
This work is supported by Hexa-X-II Project, the second phase of the European 6G Flagship Initiative.

\bibliography{ref}

\begin{thebibliography}{10}
\providecommand{\url}[1]{#1}
\csname url@samestyle\endcsname
\providecommand{\newblock}{\relax}
\providecommand{\bibinfo}[2]{#2}
\providecommand{\BIBentrySTDinterwordspacing}{\spaceskip=0pt\relax}
\providecommand{\BIBentryALTinterwordstretchfactor}{4}
\providecommand{\BIBentryALTinterwordspacing}{\spaceskip=\fontdimen2\font plus
\BIBentryALTinterwordstretchfactor\fontdimen3\font minus
  \fontdimen4\font\relax}
\providecommand{\BIBforeignlanguage}[2]{{%
\expandafter\ifx\csname l@#1\endcsname\relax
\typeout{** WARNING: IEEEtran.bst: No hyphenation pattern has been}%
\typeout{** loaded for the language `#1'. Using the pattern for}%
\typeout{** the default language instead.}%
\else
\language=\csname l@#1\endcsname
\fi
#2}}
\providecommand{\BIBdecl}{\relax}
\BIBdecl

\bibitem{Energy2011}
A.~Fehske, G.~Fettweis, J.~Malmodin, and G.~Biczok, ``The global footprint of
  mobile communications: The ecological and economic perspective,'' \emph{IEEE
  Communications Magazine}, vol.~49, no.~8, pp. 55--62, 2011.

\bibitem{6736761_mimo}
E.~G. Larsson, O.~Edfors, F.~Tufvesson, and T.~L. Marzetta, ``{Massive {MIMO}
  for next generation wireless systems},'' \emph{IEEE Communications Magazine},
  vol.~52, no.~2, pp. 186--195, 2014.

\bibitem{Jacob2011}
J.~Hoydis, S.~ten Brink, and M.~Debbah, ``Massive {MIMO}: How many antennas do
  we need?'' in \emph{2011 49th Annual Allerton Conference on Communication,
  Control, and Computing (Allerton)}, 2011, pp. 545--550.

\bibitem{8644331_nokia}
H.~Halbauer, A.~Weber, D.~Wiegner, and T.~Wild, ``{Energy efficient massive
  MIMO array configurations},'' in \emph{2018 IEEE Globecom Workshops (GC
  Wkshps)}, 2018, pp. 1--6.

\bibitem{6824974_xiang}
X.~Gao, O.~Edfors, J.~Liu, and F.~Tufvesson, ``{Antenna selection in measured
  massive MIMO channels using convex optimization},'' in \emph{2013 IEEE
  Globecom Workshops (GC Wkshps)}, 2013, pp. 129--134.

\bibitem{7569725_gao}
Y.~Gao and T.~Kaiser, ``{Antenna selection in massive {MIMO} systems:
  Full-array selection or subarray selection?}'' in \emph{2016 IEEE Sensor
  Array and Multichannel Signal Processing Workshop (SAM)}, 2016, pp. 1--5.

\bibitem{7127500_Benmimoune}
M.~Benmimoune, E.~Driouch, W.~Ajib, and D.~Massicotte, ``{Joint transmit
  antenna selection and user scheduling for massive MIMO systems},'' in
  \emph{2015 IEEE Wireless Communications and Networking Conference (WCNC)},
  2015, pp. 381--386.

\bibitem{9172108_Akhtar}
J.~Akhtar, K.~Rajawat, V.~Gupta, and A.~K. Chaturvedi, ``Joint user and antenna
  selection in massive-{MIMO} systems with {QoS}-constraints,'' \emph{IEEE
  Systems Journal}, vol.~15, no.~1, pp. 497--508, 2021.

\bibitem{7959971_dong}
Y.~Dong, Y.~Tang, and K.~Zhang, ``{Improved joint antenna selection and user
  scheduling for massive MIMO systems},'' in \emph{2017 IEEE/ACIS 16th
  International Conference on Computer and Information Science (ICIS)}, 2017,
  pp. 69--74.

\bibitem{6290313_jiang}
C.~Jiang and L.~J. Cimini, ``Antenna selection for energy-efficient {MIMO}
  transmission,'' \emph{IEEE Wireless Communications Letters}, vol.~1, no.~6,
  pp. 577--580, 2012.

\bibitem{Arash_2017}
\BIBentryALTinterwordspacing
M.~Arash, E.~Yazdian, M.~S. Fazel, G.~Brante, and M.~A. Imran, ``Employing
  antenna selection to improve energy efficiency in massive {MIMO} systems,''
  \emph{Transactions on Emerging Telecommunications Technologies}, vol.~28,
  no.~12, p. e3212, aug 2017. [Online]. Available:
  \url{https://doi.org/10.1002}
\BIBentrySTDinterwordspacing

\bibitem{7523998_jingon}
J.~Joung, ``Machine learning-based antenna selection in wireless
  communications,'' \emph{IEEE Communications Letters}, vol.~20, no.~11, pp.
  2241--2244, 2016.

\bibitem{8924932_elbir}
A.~M. Elbir and K.~V. Mishra, ``Joint antenna selection and hybrid beamformer
  design using unquantized and quantized deep learning networks,'' \emph{IEEE
  Transactions on Wireless Communications}, vol.~19, no.~3, pp. 1677--1688,
  2020.

\bibitem{8446042_ibrahim}
M.~S. Ibrahim, A.~S. Zamzam, X.~Fu, and N.~D. Sidiropoulos, ``Learning-based
  antenna selection for multicasting,'' in \emph{2018 IEEE 19th International
  Workshop on Signal Processing Advances in Wireless Communications (SPAWC)},
  2018, pp. 1--5.

\bibitem{9337188_Thang}
T.~X. Vu, S.~Chatzinotas, V.-D. Nguyen, D.~T. Hoang, D.~N. Nguyen, M.~D. Renzo,
  and B.~Ottersten, ``Machine learning-enabled joint antenna selection and
  precoding design: {From} offline complexity to online performance,''
  \emph{IEEE Transactions on Wireless Communications}, vol.~20, no.~6, pp.
  3710--3722, 2021.

\bibitem{9448468_yu}
W.~Yu, T.~Wang, and S.~Wang, ``Multi-label learning based antenna selection in
  massive {MIMO} systems,'' \emph{IEEE Transactions on Vehicular Technology},
  vol.~70, no.~7, pp. 7255--7260, 2021.

\bibitem{1312557_mu_mimo_mmse}
A.~Tenenbaum and R.~Adve, ``Joint multiuser transmit-receive optimization using
  linear processing,'' in \emph{2004 IEEE International Conference on
  Communications (IEEE Cat. No.04CH37577)}, vol.~1, 2004, pp. 588--592.

\bibitem{4355332_mmse_transceiver}
S.~Shi, M.~Schubert, and H.~Boche, ``{Downlink MMSE transceiver optimization
  for multiuser MIMO systems: Duality and sum-MSE minimization},'' \emph{IEEE
  Transactions on Signal Processing}, vol.~55, no.~11, pp. 5436--5446, 2007.

\bibitem{Väisänen2018_eig_su_mimo}
\BIBentryALTinterwordspacing
N.~Väisänen, ``\BIBforeignlanguage{English}{Beamforming techniques for
  optimizing channel capacity in wireless communications},'' Master's thesis,
  Aalto University. School of Science, 2018. [Online]. Available:
  \url{http://urn.fi/URN:NBN:fi:aalto-201802231655}
\BIBentrySTDinterwordspacing

\bibitem{1603708_sus_scheduling}
T.~Yoo and A.~Goldsmith, ``On the optimality of multiantenna broadcast
  scheduling using zero-forcing beamforming,'' \emph{IEEE Journal on Selected
  Areas in Communications}, vol.~24, no.~3, pp. 528--541, 2006.

\bibitem{1207369_multiantenna_gbc_zf}
G.~Caire and S.~Shamai, ``{On the achievable throughput of a multiantenna
  Gaussian broadcast channel},'' \emph{IEEE Transactions on Information
  Theory}, vol.~49, no.~7, pp. 1691--1706, 2003.

\bibitem{1261332_zeroforcing}
Q.~Spencer, A.~Swindlehurst, and M.~Haardt, ``{Zero-forcing methods for
  downlink spatial multiplexing in multiuser MIMO channels},'' \emph{IEEE
  Transactions on Signal Processing}, vol.~52, no.~2, pp. 461--471, 2004.

\bibitem{1177729959_sherman_morrison}
\BIBentryALTinterwordspacing
``Adjustment of an inverse matrix corresponding to changes in the elements of a
  given column or a given row of the original matrix (abstract),'' \emph{The
  Annals of Mathematical Statistics}, vol.~20, no.~4, pp. 620 -- 624, 1949.
  [Online]. Available: \url{https://doi.org/10.1214/aoms/1177729959}
\BIBentrySTDinterwordspacing

\bibitem{1177729893_sherman_morrison_1}
\BIBentryALTinterwordspacing
J.~Sherman and W.~J. Morrison, ``Adjustment of an inverse matrix corresponding
  to a change in one element of a given matrix,'' \emph{The Annals of
  Mathematical Statistics}, vol.~21, no.~1, pp. 124 -- 127, 1950. [Online].
  Available: \url{https://doi.org/10.1214/aoms/1177729893}
\BIBentrySTDinterwordspacing

\bibitem{10.2307/4145217_cauchy_interlace}
\BIBentryALTinterwordspacing
S.-G. Hwang, ``Cauchy's interlace theorem for eigenvalues of hermitian
  matrices,'' \emph{The American Mathematical Monthly}, vol. 111, no.~2, pp.
  157--159, 2004. [Online]. Available:
  \url{http://www.jstor.org/stable/4145217}
\BIBentrySTDinterwordspacing

\bibitem{3gpp_36.873_channel_model}
V.~3GPP TR~36.873, ``{Study on 3D channel model for LTE (Release 12)},''
  \emph{3rd Generation Partnership Project (3GPP)}, Dec. 2017.

\bibitem{tensorflow}
\BIBentryALTinterwordspacing
M.~Abadi \emph{et~al.}, ``{TensorFlow: Large-scale machine learning on
  heterogeneous systems},'' 2015, software available from tensorflow.org.
  [Online]. Available: \url{https://www.tensorflow.org/}
\BIBentrySTDinterwordspacing

\bibitem{eig_complexity}
\BIBentryALTinterwordspacing
V.~Y. Pan and Z.~Q. Chen, ``The complexity of the matrix eigenproblem,'' in
  \emph{Proceedings of the Thirty-First Annual ACM Symposium on Theory of
  Computing}, ser. STOC '99.\hskip 1em plus 0.5em minus 0.4em\relax New York,
  NY, USA: Association for Computing Machinery, 1999, p. 507–516. [Online].
  Available: \url{https://doi.org/10.1145/301250.301389}
\BIBentrySTDinterwordspacing

\end{thebibliography}
\bibliographystyle{IEEEtran}

\vspace{12pt}

\end{document}